\def\fq{\mathbb{F}_{q}}
\def\fqs{\mathbb{F}_{q^2}}
\def\fqc{\mathbb{F}_{q^3}}
\def\cF{\mathcal{F}}
\def\cX{\mathcal{X}}
\def\cC{\mathcal{C}}
\def\cF{\mathcal{F}}
\def\cP{\mathcal{P}}
\def\cH{\mathcal{H}}
\def\cL{\mathcal{L}}
\def\cK{\mathcal{K}}
\def\cX{\mathcal{X}}
\def\N{\mathbb{N}}
\def\Z{\mathbb{Z}}
\def\cF{\mathcal{F}}
 \def\cM{\mathcal{M}}
\newcommand{\con}{\operatorname{Con}}
\newcommand{\Cot}{\operatorname{Cotr}}
\newcommand{\rr}{\mathcal{L}}
\newcommand{\tr}{\operatorname{Tr}}
\def\a{\alpha}
\def\b{\beta}
\def\g{\gamma}
\def\ff{\mathbb{F}}
\def\bx{\bf x}
\DeclareMathOperator\supp{supp}
\DeclareMathOperator\diff{Diff}
\newcommand{\sk}{\smallskip}
\newcommand{\msk}{\medskip}
\newtheorem{theorem}{Theorem}[section]
\newtheorem{corollary}[theorem]{Corollary}
\newtheorem{proposition}[theorem]{Proposition}
\theoremstyle{definition}
\newtheorem{definition}[theorem]{Definition}
\newtheorem{example}[theorem]{Example}
\newtheorem{remark}[theorem]{Remark}
\begin{document} \sloppy
	\numberwithin{equation}{section}
	\title{Lifting iso-dual algebraic geometry codes} 
	\author[M.\@ Chara, R.\@ Podestá, L.\@ Quoos, R.\@ Toledano]{María Chara, Ricardo Podestá, Luciane Quoos, Ricardo Toledano}
	\dedicatory{\today}
	\keywords{Isodual codes, AG-codes, algebraic function field, tower of function fields} 
	\thanks{2020 {\it Mathematics Subject Classification.} Primary 11T71, 14G50, 94B05, 94B27.} 
	\thanks{The first, second and fourth authors are partially supported by CONICET, FONCyT, SECyT-UNC, and CAI+D-UNL. The third author was financed in part by the Coordenação de Aperfeiçoamento de Pessoal de Nível Superior, CAPES MATH AMSUD 88881.647739/2021-01 }	
	\address{María Chara, Researcher of CONICET at Facultad de Ingeniería Química, Universidad Nacional del Litoral, Santiago del Estero 2829, (3000) Santa Fe, Argentina. 
		\newline {\it E-mail: mchara@santafe-conicet.gov.ar}}
	
	\address{Ricardo Podest\'a, FaMAF -- CIEM (CONICET), Universidad Nacional de C\'ordoba, \newline
	Av.\@ Medina Allende 2144, Ciudad Universitaria, (5000) C\'ordoba, Argentina. 
	\newline {\it E-mail: podesta@famaf.unc.edu.ar}}

	\address{Luciane Quoos, Universidade Federal do Rio de Janeiro, Centro de Tecnologia, Cidade Universitária, Av.\@ Athos da Silveira Ramos 149, Ilha do Fundão, CEP~21.941-909, Brazil. {\it E-mail: luciane@im.ufrj.br}}

	\address{Ricardo Toledano, Facultad de Ingeniería Química, Universidad Nacional del Litoral, Santiago del Estero 2829, (3000) Santa Fe, Argentina. {\it E-mail: ridatole@gmail.com}}

\begin{abstract}
	In this work we investigate the problem of producing iso-dual algebraic geometry (AG) codes over a finite field $\fq$ with $q$ elements. Given a finite separable extension $\cM/\cF$ of function fields  and an iso-dual AG-code $\cC$ defined over $\cF$, we provide a general method to lift the code $\cC$ to another iso-dual AG-code $\tilde \cC$ defined over $\cM$ under some assumptions on the divisors $D$ and $G$ and on the parity of the involved different exponents. We apply this method to lift iso-dual AG-codes over the rational function field to  elementary abelian $p$-extensions, like the maximal function fields defined by the Hermitian, Suzuki, and one covered by the $GGS$ function field.  We also obtain long binary and ternary iso-dual AG-codes defined over cyclotomic extensions.
\end{abstract}

\maketitle

\section{Introduction} \label{sec:intro}
Let $\fq$ be a finite field with $q$ elements. A linear code $\cC$ over $\fq$ is an $\fq$-linear subspace of $\fq^n$ for $ n\geq 1$.  Associated to a code   $\cC \subseteq \fq^n$ we have three parameters: its length $n$, its dimension $k$ as a vector space over $\fq$ and its minimum distance $d$ (Hamming distance).
In the 1980's, using concepts and tools coming from algebraic geometry, Goppa constructed error-correcting linear codes from function fields defined over a finite field, see \cite{G1981} and \cite{G1982}. They are called algebraic geometry (AG) codes and have played an important role in the theory of error-correcting codes. They were used to improve the Gilbert–Varshamov bound about the limit of the parameters of a code \cite{TVZ1982}, and this was a remarkable result at that time. Moreover, every linear code can be realized as an algebraic geometry code \cite{PSV1991}.  

Let us recall first the definition of an algebraic geometry code. We use the language of function fields over finite fields following \cite{Stichbook09}. For  a function field $\cF/\fq$, consider the divisor $D=P_1+\dots+P_n$ given by the sum of pairwise distinct rational places of $\cF$,  and another divisor $G$ such that $P_i$ is not in the support of $G$ for $i = 1,\dots ,n$. The linear {\it algebraic geometry code} $C_{\mathcal{L}}(D, G)$ defined over $\cF$ is given by
\begin{equation} \label{defAGcode}
	C_{\mathcal{L}}(D, G)=\{ (f(P_1),\ldots, f(P_n))\, :\, f\in \mathcal{L}(G)\} \subseteq \fq^n,
\end{equation}
where $\cL(G)=\{z\in \cF\,:\, (z)\ge -G\}\cup \{0\}$ denotes the Riemann-Roch space associated to the divisor $G$.

Recall now that the dual code $\cC^\perp$ of a linear code $\cC$ is the orthogonal complement of $\cC$ in $\fq^n$ with respect to the standard inner product of $\fq^n$. A code is said to be {\em self-dual} if $\cC=\cC^\perp$.
Self-dual codes have been investigated in \cite{MST2008}, they have applications to quantum codes through a construction in \cite{KM2008} (see also \cite{BMZ2021}, \cite{LP2017} and  \cite{MTF2016}); and new constructions have been recently proposed in, for example, \cite{S2021} and \cite{SYS2023}. In \cite{BS2019}, infinite families of self-dual codes which are asymptotically better than the asymptotic Gilbert–Varshamov bound were constructed.

The condition of being self-dual can be relaxed by considering the following notion of equivalence of linear codes: two linear codes $\cC_1$ and $\cC_2$ are said to be {\it equivalent} if there exists a vector ${\bf x}\in (\mathbb{F}_q^*)^n$ such that $\cC_1={\bf x}\cdot \cC_2$ where for  ${\bf x}=(x_1,\ldots,x_n)$ we denote $${\bf x}\cdot \cC =\{(x_1·c_1,\ldots,x_n·c_n): (c_1,\ldots,c_n)\in \cC \}.$$ Now a linear code $\cC$ is called {\it iso-dual} if it is equivalent to its dual code $\cC^\perp$, that is if there exists  a non-zero vector ${\bf x}\in (\mathbb{F}_q^*)^n$ such that 
	$$\cC^\perp = {\bf x} \cdot \cC.$$
We will speak of ${\bf x}$-iso-dual codes when mentioning the vector ${\bf x}$ explicitly is needed.

The main goal of this work is to investigate the construction of iso-dual AG codes. They were first studied in full generality in \cite{Stich1988}, where several concrete examples of iso-dual AG codes over function fields of arbitrary genus over a finite field were presented. Much later it was proved in \cite{Stich06} that the class of iso-dual codes attains the Tsfasman-Vladut and Zink bound over a finite field of quadratic cardinality. Iso-dual AG codes also showed up in the study of the so called order bounds for the minimum distance of AG-codes (see, for instance, \cite{GMRT2011} and the references therein). Overall the construction of iso-dual codes in a function field poses a significant challenge, as it relies on a deep understanding of differentials and function divisors possessing specific properties (see Proposition \ref{prop:isodual}). We propose here an alternative method to construct them (see Theorem \ref{teolevantado}): given a finite separable extension $\cM/\cF$ of function fields and an iso-dual AG-code $\cC$ defined over $\cF$,  we use the conorm map (see Section \ref{sec:preliminaries}) to lift the AG-code $\cC$ to another iso-dual AG-code over $\cM$. We obtain in this way a longer iso-dual AG-code whose parameters can be estimated in the standard way (see  Corollary \ref{parameters}) in many cases. 
To the best of our knowledge, iso-dual codes have not undergone a thorough investigation.

We employ this method of lifting iso-dual AG-codes across various scenarios. The most favourable of them is when $\cF$ is a rational function field, because iso-dual codes over a rational function field can be easily constructed (see items \eqref{isorat} and \eqref{isoratcan} of Proposition \ref{prop:isodual}).   In particular with our method we can construct iso-dual codes over {\it maximal function fields}, that is, function fields defined by algebraic curves $\cX$ of genus $g(\cX)$ such that its number $\cX(\fqs)$ of $\fqs$-rational points attains the Hasse-Weil upper bound
	$$ \# \cX(\fqs) = q^2+1+2g(\cX)q.$$
Maximal curves are a fruitful ambient for explicit constructions of codes, since its is well know that  curves with a small genus with respect its number of rational points, produce codes with better relative  parameters. Research in codes in recent years has proven to be a highly productive field of investigation. For instance, construction of locally recoverable codes over maximal curves can be found in \cite{BMQ2020} or \cite{CKMTW}, AG-codes over the maximal Beelen-Montanucci curve \cite{VL2022}, weights of dual codes over the maximal GK-curve \cite{BB2023}, lifting of AG-codes in \cite{CPT2023}. The  maximal Hermitian curve was used for  applications to quantum codes \cite{FWF2024},   self orthogonal maximum distance separable codes \cite{LSZLS2023},  investigation of the isometry-dual property in flags of codes \cite{BCQ2022}, construction of codes using places of higher degree \cite{MM2005}, and many point codes in \cite{KNT2020}.

The paper is organized as follows. In Section \ref{sec:preliminaries} we present the necessary background on the theory of function fields and codes. We recall some definitions and  basic facts on AG-codes and iso-dual AG-codes. In Section \ref{sec:isoHerm}, we construct families of iso-dual and self-dual AG-codes $C_\cL(D,G)$ over the Hermitian function field $\cH/\ff_{q^2}$ 
(see Theorem~\ref{thm:herm}). 
In Section \ref{sec:lifting}, given a finite separable extension $\cM/\cF$ of algebraic function fields we propose a way to lift an iso-dual code over $\cF$ to an iso-dual code over $\cM$, given certain constraints in the extension (see Theorem \ref{teolevantado}) and the involved divisors of $\cF$. 
In Corollary \ref{parameters} we give the parameters of the lifted code. In Section~\ref{sec: from rational} we consider elementary abelian $p$-extensions of the rational function field. By lifting iso-dual rational AG-codes we obtain iso-dual AG-codes over the Hermitian,  Suzuki and GGS curves.
In Section \ref{sec: from hermitian} we consider a new curve $\cX$ obtained as an extension of the Hermitian function field.  We compute its genus and its number of rational points and in Theorem \ref{thm: isodual Herm} we show how to obtain iso-dual AG-codes defined over $\cX$ using our method of lifting iso-dual codes. 	
Finally, in Section \ref{sec:cyclotomic} we consider cyclotomic extensions. By lifting very simple iso-dual codes AG-codes over binary and ternary rational function fields, we get long binary and ternary iso-dual AG-codes over subfields of cyclotomic extensions (see Theorems \ref{thm:binary} and \ref{thm:ternary}).

\section{Preliminaries} \label{sec:preliminaries}
Here we recall some basic facts of extensions of algebraic function fields, divisors, AG-codes and iso-dual AG-codes.

\subsection*{Algebraic function fields} 
Let $\cF/\fq$ be an algebraic function field in one variable of genus $g=g(\cF)$. We denote by $\mathcal P_\cF$ the set of places in $\cF$, by $\Omega_\cF$ the space of Weil differentials in $\cF$, by $v_{P}$ the discrete valuation of $\cF/\fq$ associated to the place $P\in \cP_\cF$, and by $\mathrm{Div}(\cF)$ the free abelian group generated by the places in $\cF$. An element in $\mathrm{Div}(\cF)$ is called a divisor. For a function $z \in \cF$ we let $(z)^\cF, (z)^\cF_\infty$ and $(z)^\cF_0$ stand for the principal, pole and zero divisors of the function $z$ in $\cF$, respectively. Two divisors $A,B \in \mathrm{Div}(\cF)$ are equivalent, denoted $A \sim B$, if they differ by a principal divisor, i.e.\@ $B=A+(z)^\cF$ for some $z \in \cF$.

Let us consider now a separable function field extension $\cF'/\cF$. The conorm of a place $P$ in $\cF$ is the divisor in $\cF'$ defined by
	$$ \con_{\cF' /\cF} (P) = \sum_{Q \in \cF', Q|P} e(Q|P) \, Q,$$
where $e(Q|P)$ denotes the ramification index of $Q$ over $P$.
For a divisor $D$ in $\mathrm{Div}(\cF)$ the \textit{conorm} map is the natural homomorphism from $\mathrm{Div}(\cF)$ to $\mathrm{Div}(\cF')$ 
extended by linearity; 
that is, 
\begin{equation} \label{eq:conorm}
	\con_{\cF' /\cF} \big(\sum_i n_i P_i \big) = \sum_i n_i \, \con_{\cF' /\cF} (P_i).
\end{equation}

Suppose that $\cF'$ and $\cF$ have fields of constants $K'$ and $K$, respectively (in the paper, however, we will only consider geometric extensions of functions fields, i.e.\@ with $K'=K$). From Theorem~3.4.6 in \cite{Stichbook09} we have that for every Weil differential $\omega$ of $\cF/K$ there exists a unique Weil differential $\omega'$ in $\cF'/K'$ such that $\tr_{K'/K}(\omega'(\alpha)) = \omega(\tr_{F'/F}(\alpha))$ for every $\alpha$ in the adele space 
$\mathcal{A}_{F'/F} = \{\alpha \in \mathcal{A}_{\cF'} : \alpha_{P'}=\alpha_{Q'} \text{ whenever } P'\cap F = Q' \cap F\}$. This Weil differential is called the \textit{cotrace} of $\omega$ in $\cF'/\cF$ and it is denoted by $\Cot_{\cF'/\cF}(\omega)$. 
If $\omega \ne 0$ and $(\omega)$ is the corresponding divisor in $\cF$, one has 
\begin{equation} \label{eq:cotrace}
	(\Cot_{\cF'/\cF}(\omega)) = \con_{\cF' /\cF}((\omega)) + \diff(\cF'/\cF),
\end{equation}
with $\diff(\cF'/\cF)$ the different divisor of $\cF'/\cF$ defined by 
\begin{equation} \label{eq:diff}
	\diff(\cF'/\cF) =\sum_{P \in P_\cF} \sum_{P'|P} d(P'|P) \, P
\end{equation}
where $d(P'|P)\ge 0$ is the different exponent of $P'|P$ (which is $0$ for almost all places).
We recall that if $\cF'/\cF$ is also finite and $g'$ and $g$ denote the genera of $\cF'$ and $\cF$ respectively, then
the Riemann-Hurwitz genus formula asserts that 
\begin{equation} \label{eq: genus fla}
	2g'-2=[\cF':\cF](2g-2) + \deg(\diff(\cF'/\cF))
\end{equation}
(see for instance Theorem 3.4.13 in \cite{Stichbook09}).

\subsection*{Iso-dual AG-codes} \label{sec:isodual}
A linear code $\cC$ is an $\fq$-linear subspace of $\fq^n$ with $n\geq 1$. Associated to a code we have three important parameters, its length $n$,  its dimension $k$ as a vector subspace over $\fq$, and its minimum (Hamming) distance $d$. One says that $\cC$ is an $[n, k, d]$-code. From now on we will use AG-codes $C_\cL(D,G)$ over an algebraic function field $\cF/\ff_q$ as defined in \eqref{defAGcode}.

We now recall the basic bounds for the parameters of an AG-code.

\begin{proposition}
	\label{prop:dimcode} 
The AG-code $C_\cL(D, G)$ as in \eqref{defAGcode} is an $[n,k,d]_q$-code with  
\begin{equation} \label{dimcode} 
	k=\ell(G)-\ell(G-D) \qquad \text{ and } \qquad d \geq n-\deg(G).
\end{equation}
Moreover, we have that:
\begin{enumerate}[$(a)$]
\item If $\deg (G) < n$, then $k=\ell(G) \geq \deg(G)+1-g$. \msk 

\item If $2g-2<\deg(G) < n$, then $k=\ell(G)= \deg(G)+1-g$.
\end{enumerate}
\end{proposition}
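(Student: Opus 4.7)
The plan is to exhibit $C_\cL(D,G)$ as the image of the natural evaluation map on the Riemann--Roch space $\cL(G)$ and then read off the dimension and distance from basic properties of divisors.

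First I would introduce the $\fq$-linear evaluation map
\begin{equation*}
\mathrm{ev}_D \colon \cL(G) \longrightarrow \fq^n, \qquad f \longmapsto (f(P_1),\ldots,f(P_n)),
\end{equation*}
which is well defined because $\sop(G)$ is disjoint from the $P_i$, so each $f(P_i)$ makes sense. By definition $C_\cL(D,G) = \operatorname{Im}(\mathrm{ev}_D)$. A function $f \in \cL(G)$ lies in $\ker(\mathrm{ev}_D)$ exactly when $v_{P_i}(f)\geq 1$ for all $i$, which combined with $(f)\geq -G$ gives $(f)\geq D-G$, i.e.\ $f \in \cL(G-D)$. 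Hence $\ker(\mathrm{ev}_D) = \cL(G-D)$ and the rank--nullity theorem yields $k=\ell(G)-\ell(G-D)$.

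For the distance bound, I would take a nonzero codeword $c = \mathrm{ev}_D(f)$ of weight $w$. Let $i_1,\ldots,i_{n-w}$ index the zero coordinates and set $D' = P_{i_1}+\cdots+P_{i_{n-w}}$. Then $f \ne 0$ and $(f)\geq -G + D'$, so $f \in \cL(G-D')\setminus\{0\}$. Thus $\ell(G-D')\geq 1$ forces $\deg(G-D')=\deg(G)-(n-w)\geq 0$, which rearranges to $w \geq n - \deg(G)$, proving $d\geq n-\deg(G)$.

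For parts $(a)$ and $(b)$ I would apply the standard consequences of Riemann--Roch. If $\deg(G) < n$ then $\deg(G-D) = \deg(G)-n < 0$, so $\ell(G-D)=0$ and hence $k = \ell(G)$; Riemann's theorem then gives $\ell(G)\geq \deg(G)+1-g$, which is $(a)$. If additionally $2g-2 < \deg(G)$, then $G$ is non-special and Riemann--Roch gives the equality $\ell(G)=\deg(G)+1-g$, establishing $(b)$. The whole argument is formal bookkeeping once the evaluation map and its kernel are identified; the only thing to be slightly careful about is the kernel computation, since one must combine the two inequalities $(f)\geq -G$ and $v_{P_i}(f)\geq 1$ correctly using that $\sop(G)\cap\sop(D)=\emptyset$, but no real obstacle arises.
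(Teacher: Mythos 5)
Your proof is correct and is exactly the standard argument (evaluation map with kernel $\cL(G-D)$, the support-of-zeros trick for the distance bound, and Riemann--Roch for $(a)$ and $(b)$); the paper itself does not reprove this but simply cites Theorem 2.2.2 and Corollary 2.2.3 of Stichtenoth, whose proof is the one you gave. No issues.
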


\begin{proof}
	See for instance Theorem 2.2.2 and Corollary 2.2.3 in \cite{Stichbook09}.
\end{proof}

Let $\cC \subseteq \fq^n$ be a linear code and ${\bf x} =(x_1, \ldots, x_n) \in (\fq^\ast)^n$. 
The set
	$${\bf x} \cdot \cC :=\{ (x_1 c_1, \ldots, x_n c_n) : (c_1, \ldots, c_n) \in \cC \},$$
is clearly another linear code over $\fq$. We notice that the codes $\cC$ and ${\bf x} \cdot \cC$ have the same length, dimension and minimum distance. 
We use the same notion of equivalence as in Definition 2.2.13 of \cite{Stichbook09}. 
\begin{definition}
Two linear codes $\cC_1$ and $\cC_2$ over $\fq$ are \textit{equivalent} if $\cC_2={\bf x} \cdot\cC_1$ for some ${\bf x} \in (\fq^\ast)^n$. In this case we write $\cC_1 \sim_{\bf x} \cC_2$, or simply $\cC_1 \sim \cC_2$ when the distinction of the vector $\bf x$ is not necessary.
\end{definition}

In the next proposition we collect well-known results on the equivalence of AG-codes.

\goodbreak 
\begin{proposition} \label{prop2210}
Let $C_{\cL}(D,G)$ be an AG-code with $D=P_1+\cdots+P_n$.
	\begin{enumerate}[$(a)$]
		\item If  $G' \sim G$  then $C_{\cL}(D,G') \sim C_{\cL}(D,G)$. \sk 
\item If $\mathcal{C}$ is a linear code over $\ff_q$ equivalent to $C_{\cL}(D,G)$, then there exists a divisor $G'\sim G$ such that $G'$ and $D$ have disjoint support and $\mathcal{C}=C_{\cL}(D,G')$.
		\item Moreover, for $z \in \cF$ such that $z(P_i) \neq 0$ for $i=1, \dots, n$ we have that $$C_{\mathcal{L}}(D,G-(z))= {\bf x} \cdot C_{\mathcal{L}}(D,G),$$ where ${\bf x}=(z(P_1), \dots , z(P_n)) \in (\fq^*)^n$. \sk 
	\end{enumerate}
\end{proposition}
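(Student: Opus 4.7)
\textit{Proof plan.} I would handle part $(b)$ first by a direct computation, then deduce the ``if'' direction of $(a)$ as an immediate corollary. The ``only if'' direction of $(a)$ needs weak approximation together with a comparison of divisors, and will be the main difficulty.

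For $(b)$, the key observation is that the multiplication map $\mu_z:\cL(G)\to\cL(G-(z))$, $f\mapsto fz$, is an $\fq$-linear bijection (with inverse $g\mapsto g/z$), since the condition $(fz)=(f)+(z)\geq -(G-(z))$ is equivalent to $(f)\geq -G$. Because no $P_i$ lies in $\sop(G)\cup\sop(G-(z))$, one has $v_{P_i}(z)=0$, so $z(P_i)\in\fq^*$ and the evaluation computes as $(fz)(P_i)=z(P_i)f(P_i)$. By the definition \eqref{defAGcode} of an AG-code, the codeword attached to $fz$ in $C_{\mathcal L}(D,G-(z))$ is the componentwise product of ${\bf x}=(z(P_1),\dots,z(P_n))\in(\fq^*)^n$ with the codeword attached to $f$ in $C_{\mathcal L}(D,G)$, which yields $C_{\mathcal L}(D,G-(z))={\bf x}\cdot C_{\mathcal L}(D,G)$.

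For the ``if'' direction of $(a)$: if $G\sim G'$, pick $w\in\cF^{*}$ with $G'-G=(w)$, so that $G'=G-(w^{-1})$ and $v_{P_i}(w^{-1})=-v_{P_i}(w)=0$ since the $P_i$ avoid both $\sop(G)$ and $\sop(G')$. Applying $(b)$ with $z=w^{-1}$ gives $C_{\mathcal L}(D,G')={\bf x}\cdot C_{\mathcal L}(D,G)$ with ${\bf x}=(w^{-1}(P_1),\dots,w^{-1}(P_n))$, whence $C_{\mathcal L}(D,G)\sim C_{\mathcal L}(D,G')$.

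For the ``only if'' direction, assume $C_{\mathcal L}(D,G')={\bf x}\cdot C_{\mathcal L}(D,G)$ for some ${\bf x}=(x_1,\dots,x_n)\in(\fq^{*})^n$. By the weak approximation theorem I would choose $u\in\cF$ with $u(P_i)=x_i$ for every $i$, which in particular forces $v_{P_i}(u)=0$. Part $(b)$ then gives ${\bf x}\cdot C_{\mathcal L}(D,G)=C_{\mathcal L}(D,G-(u))$, hence $C_{\mathcal L}(D,G')=C_{\mathcal L}(D,G-(u))$. The remaining step, which I expect to be the main obstacle, is to infer $G'\sim G-(u)$ from this equality of codes; to do this I would select (after replacing $G$ by an equivalent divisor to reach the generic range $2g-2<\deg G<n$ if necessary so that evaluation is injective) a function $f\in\cL(G-(u))$ with $f(P_i)\neq 0$ for all $i$, let $g\in\cL(G')$ be the unique preimage with the same evaluation vector, and argue that $g/f$ has divisor precisely $G'-(G-(u))$ by showing both divisors have the same degree and support and using the injectivity of the evaluation on the Riemann--Roch space of their difference. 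Combined with $G-(u)\sim G$ from the ``if'' direction, this yields $G\sim G'$ and completes the proof.
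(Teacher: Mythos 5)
Your part $(b)$ and the ``if'' half of $(a)$ are correct, and they follow the standard route (the paper itself simply cites Proposition 2.2.14 of \cite{Stichbook09} for this statement), so no issue there. The genuine gap is in your ``only if'' argument. The pivotal step you propose --- ``replacing $G$ by an equivalent divisor to reach the generic range $2g-2<\deg G<n$ if necessary so that evaluation is injective'' --- cannot work: linear equivalence preserves degree, so if $\deg G\leq 2g-2$ or $\deg G\geq n$ no equivalent divisor puts you in that range, and the injectivity of the evaluation map (hence your ``unique preimage $g$'') is exactly what fails when $\deg G\geq n$. Moreover, the existence of $f\in\cL(G-(u))$ with $f(P_i)\neq 0$ for all $i$ is not justified: $\cL(G-(u))$ may be trivial, and over a small field $\fq$ the union of the hyperplanes $\{f:f(P_i)=0\}$ can exhaust the whole Riemann--Roch space. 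Finally, the concluding claim that $g/f$ has divisor \emph{exactly} $G'-(G-(u))$ does not follow from $g\in\cL(G')$, $f\in\cL(G-(u))$ and equal evaluation vectors; membership in Riemann--Roch spaces only yields inequalities between divisors, not the equality you need.

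In fact, without some degree restriction the ``only if'' implication cannot be proved by any argument: if $G$ and $G'$ both have negative degree (and disjoint support from $D$), then $C_\cL(D,G)=C_\cL(D,G')=\{0\}$, so the codes are trivially equivalent, while $G$ and $G'$ need not be linearly equivalent on a curve of positive genus. So this direction requires hypotheses (e.g.\ $2g-2<\deg G,\deg G'<n$, where one can compare $2G-D$ and $2G'-D$ with canonical divisors via duality, in the spirit of Proposition \ref{prop:isodual}) that your plan never secures. Note also that only part $(b)$ and the ``if'' half of $(a)$ are what the paper actually uses later, so the part of your proof that is solid is the part that matters for the applications.
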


\begin{proof}
See for instance Proposition 2.2.14 in \cite{Stichbook09}.
\end{proof}

The dual code $\cC^\perp$ of a linear  code $\cC$ over $\fq$  is the orthogonal complement of $\cC$ in $\fq^n$ with the standard inner product of $\fq^n$. It was shown in Corollary 2.6 of \cite{Stich1988} that the dual of a linear AG-code is also a linear AG-code formed with divisors that can be explicitly constructed. More precisely we have the following
\begin{proposition}
\label{prop:AGdual}
The dual code of $C_{\cL}(D,G)$ is still a linear AG-code, in fact
\begin{equation} \label{AGdual}
	C_{\cL}(D,G)^\perp = C_{\cL}(D,D-G+(\eta)) 
\end{equation}
where $\eta$ is a Weil differential in $\cF$ such that 
$v_{P_i}(\eta)=-1$ and $\eta_{P_i}(1)=1$, for $i=1,\ldots, n$. 
\end{proposition}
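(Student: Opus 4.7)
The plan is to reduce the statement to the ``differential'' description of the dual and then use the local data of $\eta$ to rewrite that differential code as an $\cL$-code. Concretely, I introduce the auxiliary object
\[
C_\Omega(D,G)=\{(\omega_{P_1}(1),\ldots,\omega_{P_n}(1))\,:\,\omega\in\Omega_\cF,\ (\omega)\ge G-D\}\cup\{0\},
\]
and prove the chain of equalities $C_\cL(D,G)^\perp=C_\Omega(D,G)=C_\cL(D,D-G+(\eta))$.

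For the first equality, I would argue as follows. If $f\in\cL(G)$ and $\omega$ is a Weil differential with $(\omega)\ge G-D$, then $(f\omega)\ge -G+(G-D)=-D$, so $f\omega$ has at worst simple poles at $P_1,\dots,P_n$ and is regular at every other place. The residue theorem then yields
\[
0=\sum_{P\in\cP_\cF}\mathrm{res}_P(f\omega)=\sum_{i=1}^n\mathrm{res}_{P_i}(f\omega)=\sum_{i=1}^n f(P_i)\,\omega_{P_i}(1),
\]
where in the last step I use the standard identification of $\omega_{P}(1)$ with $\mathrm{res}_P(\omega)$ together with the multiplicative rule $\mathrm{res}_{P_i}(f\omega)=f(P_i)\,\mathrm{res}_{P_i}(\omega)$ (valid since $v_{P_i}(f)\ge 0$ and $v_{P_i}(\omega)\ge -1$). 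Thus $C_\Omega(D,G)\subseteq C_\cL(D,G)^\perp$, and one closes by a dimension count using Riemann--Roch applied to both $\cL(G)$ and $\Omega(G-D)$ (comparing $\dim C_\cL(D,G)$ and $n-\dim C_\Omega(D,G)$ via $\ell(G)-\ell(G-D)$ and $i(G-D)-i(G)$).

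For the second equality, the idea is to use the chosen differential $\eta$ as a ``trivialization''. Since $\eta\ne 0$, multiplication by $\eta$ is an $\fq$-linear isomorphism $\cF\to\Omega_\cF$, and one checks that it restricts to a bijection
\[
\cL(D-G+(\eta))\;\xrightarrow{\;f\mapsto f\eta\;}\;\Omega(G-D),
\]
because $(f\eta)\ge G-D$ is equivalent to $(f)\ge -(D-G+(\eta))$. Given $f\in\cL(D-G+(\eta))$, the assumption $v_{P_i}(\eta)=-1$ together with $P_i\notin\sop(G)$ forces $v_{P_i}(D-G+(\eta))=1-0-1=0$, so $f$ is regular at each $P_i$ and $f(P_i)$ makes sense. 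Then $\omega=f\eta$ satisfies
\[
\omega_{P_i}(1)=\mathrm{res}_{P_i}(f\eta)=f(P_i)\,\mathrm{res}_{P_i}(\eta)=f(P_i)\cdot\eta_{P_i}(1)=f(P_i),
\]
so the two codes produce exactly the same evaluation vectors, proving $C_\Omega(D,G)=C_\cL(D,D-G+(\eta))$.

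The only delicate point is the interpretation and manipulation of $\omega_P(1)$ as an honest residue, in particular the multiplicative formula $\mathrm{res}_{P}(f\omega)=f(P)\mathrm{res}_P(\omega)$ and the fact that $\omega_P(1)$ depends only on the local component of $\omega$ at $P$; once this is in hand the rest is formal divisor bookkeeping. The existence of an $\eta$ with $v_{P_i}(\eta)=-1$ and $\eta_{P_i}(1)=1$ at each $P_i$ is a standard application of weak approximation together with strong approximation for differentials (or, equivalently, of the surjectivity of the residue map on $\Omega(-D)/\Omega(0)$), and I would invoke it as known rather than prove it from scratch.
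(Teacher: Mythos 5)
Your argument is correct: the chain $C_\cL(D,G)^\perp=C_\Omega(D,G)=C_\cL(D,D-G+(\eta))$, with the residue-theorem inclusion plus the Riemann--Roch dimension count for the first equality and the trivialization $f\mapsto f\eta$ of $\Omega(G-D)$ by $\cL(D-G+(\eta))$ for the second, is precisely the standard proof. The paper does not reprove the statement but simply cites Proposition 2.2.10 of Stichtenoth, whose proof is this same argument, so your approach coincides with the one the paper relies on.
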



Recall that a linear code $\cC$ is said to be  {\em self-orthogonal} if $\cC \subset \cC^\perp$ and {\em self-dual} if $\cC=\cC^\perp$. The notion of equivalence of linear codes allows one to define the following class of linear codes generalizing self-dual codes.
\begin{definition}
Let ${\bf x} \in (\fq^*)^n$. A code $\cC \subset \fq^n$ is called \textit{$\bf x$-iso-dual} if $\cC^\perp=\bf x\cdot \cC$. We will simply speak of \textit{iso-dual} codes when there is no need to specify the vector ${\bf x}$.
\end{definition}

Clearly the class of iso-dual codes contains the class of the self-dual codes, that is, $\bf x$-iso-dual codes are just self-dual codes with  ${\bf x}=(1,1,\ldots,1)$. 
We observe that our definition of $\bf x$-iso-dual code is a particular case of iso-dual code where the equivalence of codes is given in terms of the monomial equivalence of codes, that is codes which are isometric to their duals via a permutation of coordinates and multiplying certain coordinates by non-zero constants (see, for instance, \cite{KL2017}).

We now recall a necessary condition for an AG-code to be iso-dual and a well known  characterization of iso-dual AG-codes constructed from  rational function fields. 
\begin{proposition} \label{prop:isodual}
Let $C_{\cL}(D,G)$ be an AG-code of length $n$  defined as in \eqref{defAGcode} over a function field $\cF$ of genus $g$. 
  \begin{enumerate}[$(a)$]
	\item $C_{\cL}(D,G)$ is iso-dual if there exists a canonical divisor $W$ such that $W\sim 2G-D$. More precisely, 
	$$C_{\cL}(D,G)^\perp={\bf x} \cdot C_{\cL}(D,G)$$ 
	where ${\bf x}=({\rm res}_{P_1}(\omega), \ldots, {\rm res}_{P_n}(\omega))$ with $\omega$ a Weil differential such that $2G-D=(w)$. \sk \label{isodualcond}

	\item If $\cF$ is a rational function field and $n\geq 2$ is even, then $C_{\cL}(D,G)$ is iso-dual if and only if $\deg G=\tfrac 12 (n-2)$. \sk \label{isorat}
	
	\item In particular, if $D$ and $G$ are divisors of a rational function field with disjoint support and $C_{\cL}(D,G)$ is an iso-dual AG-code, then $2G-D$ is a canonical divisor.\label{isoratcan}
  \end{enumerate}
\end{proposition}
\begin{proof}
	If $W$ is a canonical divisor equivalent to $2G-D$ then  $W'=2G-D$ is also a canonical divisor of $\cF$.
	 Since $G$ and $D=P_1+\cdots+P_n$ have disjoint support we have that $\nu_{P_i}(W')=-1$ for $i=1,\ldots,n$ and thus item \eqref{isodualcond} follows immediately from Supplement to Theorem 3.1 of \cite{Stich1988}. Item \eqref{isorat} appears as Corollary 3.6 of \cite{Stich1988}  and  item \eqref{isoratcan} is a direct consequence of item \eqref{isorat} and the characterization of canonical divisors given in Proposition  1.6.2 of \cite{Stichbook09}. 
\end{proof}

\begin{remark}\label{stichexample}
		Item \eqref{isoratcan} of  Proposition \ref{prop:isodual} does not hold in general. In Example 4.1.2 of \cite{Stich1988} a self-dual AG-code $C_\rr(D,G)$ is constructed such that $2G-D$ is not a canonical divisor. This was, in fact, a counterexample to a theorem in \cite{DM1985} where a characterization of a certain class of self-dual AG-codes defined over elliptic function fields was presented.
\end{remark}

\section{Iso-dual AG-codes over the Hermitian function field} \label{sec:isoHerm}
In general, it is a challenging problem to construct iso-dual codes in a function field since it strongly depends on the knowledge of differentials and 
divisors with suitable properties. We illustrate the problem in this section by presenting a family of iso-dual codes over the Hermitian function field that are self-dual in some cases. In the next section we will propose a method to lift a 
known iso-dual code in an extension of function fields. This method will provide an easier way to obtain iso-dual codes over an algebraic curve.

The Hermitian function field $\mathcal{H}/\fqs$, as a tamely ramified extension of $\ff_{q^2}(x)$, is the function field $\mathcal{H}=\fqs(x,y)$  given by the affine equation 
\begin{equation} \label{eq:Hermitian}
	y^{q+1}=x^q+x.
\end{equation}
It is well known (see Lemma 6.4.4 of \cite{Stichbook09}) that $\mathcal{H}$ has $q^3+1$ rational places and genus 
	$$g=\tfrac12 q(q-1).$$
For $i=1,\ldots,q$, let $Q_i$ be the totally ramified places; that is, $Q_i$ is a place over $P_i$ in the rational function field $\mathcal{F}=\fqs(x)$ corresponding to a root $\a_i$ of $x^q+x=0$. 
For each $\a \in \fqs$ with $\a^q+\a \ne 0$ we have $q+1$ places in $\fqs(x,y)$ over the place $P_\a$ in $\fqs(x)$ associated to the zero of the function $x-\a$. 

In 2021, L.\@ Sok proposed many families of self-dual codes $C_\cL(D,G)$ over the Hermitian function field in the case the divisor $G$ has support in one point (the only pole of $x$) in the function field (see \cite[Theorem 9]{S2021}).
Here we propose  families of iso-dual codes over the Hermitian function field where the support of $G$ consist of all the totally ramified places in $\mathcal{H}/\fqs(x)$.

\begin{theorem} \label{thm:herm}
Let $\mathcal{H}=\fqs(x,y)$ be the Hermitian function field of genus $g=\tfrac 12 q(q-1)$
defined by $y^{q+1}=x^q+x$. 
Let $\beta$ a non-zero integer and consider the disjoint divisors
	$$D=\sum_{ \a^q +\a \ne 0} Q | P_\a , 
	\qquad \text{and} \qquad G = 
	\big( \tfrac12 (q^3+q^2-2q-2)
	 -q\b \big) Q_\infty +\beta \sum_{i=1}^q Q_i,$$ 
	where the $Q_i$ are all the totally ramified places in $\mathcal{H}/\fqs$ associated to a root $\a_i$ of $x^q+x=0$ and $\deg G >0$.
Then, the code $C_\cL(D,G)$ is ${\bf x}$-iso-dual with parameters 
	$$[n=q^3-q, \:\, k=\tfrac 12 (q^3-q), \:\, d \ge \tfrac 12 q^2(q-1)+1],$$
where ${\bf x}=(z(P_1),\ldots,z(P_n))$ with $z=y^{2\beta+2-q^2}$.
Moreover, if $q^2-1$
is a divisor of $2\b+1$ (hence $q$ is even), then $C_\cL(D, G)$ is self-dual. 
\end{theorem}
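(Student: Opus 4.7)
The plan is to apply Proposition~\ref{prop:isodual}(a): I exhibit a Weil differential $\eta$ of $\cH$ with $(\eta)=2G-D$, and then identify its residues at $\supp(D)$ with the vector $\mathbf{x}$ in the statement. This rests on three divisor calculations in $\cH$. From $y^{q+1}=x^{q}+x$ and the fact that each $Q_i$ and $Q_\infty$ is totally ramified with index $q+1$, one has $(y)^{\cH}=\sum_{i=1}^{q}Q_i - q\,Q_\infty$. Since $x^{q^{2}}-x=\prod_{\gamma\in\fqs}(x-\gamma)$ has a simple zero at every rational place of $\fqs(x)$, applying the conorm yields
$$(x^{q^{2}}-x)^{\cH}=D+(q+1)\sum_{i=1}^{q}Q_i - q^{2}(q+1)\,Q_\infty.$$
The extension $\cH/\fqs(x)$ is tamely ramified with different exponent $q$ at $Q_\infty$ and at each $Q_i$, so $\diff(\cH/\fqs(x))=q\big(Q_\infty+\sum_{i=1}^{q}Q_i\big)$, and the cotrace formula~\eqref{eq:cotrace} applied to $dx$ (whose divisor in $\fqs(x)$ is $-2P_\infty$) gives $(dx)^{\cH}=-(q+2)Q_\infty+q\sum_{i=1}^{q}Q_i$.

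Next, take the logarithmic differential $\omega=d(x^{q^{2}}-x)/(x^{q^{2}}-x)$. Since the characteristic divides $q$, $d(x^{q^{2}}-x)=-dx$, whence
$$(\omega)=(dx)^{\cH}-(x^{q^{2}}-x)^{\cH}=(q^{3}+q^{2}-q-2)\,Q_\infty-\sum_{i=1}^{q}Q_i-D.$$
Comparing with $2G-D=(q^{3}+q^{2}-2q-2-2q\beta)Q_\infty+2\beta\sum_{i=1}^{q}Q_i - D$ and regrouping via $(y)^{\cH}$ one finds $(\omega)-(2G-D)=-(2\beta+1)(y)^{\cH}$. Hence $\eta:=y^{2\beta+1}\omega$ has divisor $2G-D$, providing a canonical divisor equal to $2G-D$, and Proposition~\ref{prop:isodual}(a) delivers the iso-duality of $C_{\cL}(D,G)$. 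To pin down $\mathbf{x}$, observe that $v_{P}(x^{q^{2}}-x)=1$ at every $P\in\supp(D)$, so $\omega$ has ${\rm res}_{P}(\omega)=1$ and, since $y$ is a unit at $P$, ${\rm res}_{P}(\eta)=y(P)^{2\beta+1}$. The rationality of $P$ gives $y(P)\in\fqs^{*}$, hence $y(P)^{q^{2}-1}=1$ and $y(P)^{2\beta+1}=y(P)^{2\beta+2-q^{2}}=z(P)$, matching the claimed vector.

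The parameters come from Proposition~\ref{prop:dimcode}: the $q^{2}-q$ values of $\alpha\in\fqs$ with $\alpha^{q}+\alpha\ne 0$ each lie below $q+1$ unramified rational places, yielding $n=q^{3}-q$; a direct check shows $2g-2<\deg G=\tfrac12(q^{3}+q^{2}-2q-2)<n$, so $k=\deg G+1-g=\tfrac12(q^{3}-q)$ and the Goppa bound gives $d\ge n-\deg G=\tfrac12q^{2}(q-1)+1$. Finally, for self-duality I need $z(P)=y(P)^{2\beta+1}=1$ for every $P\in\supp(D)$. The map $\alpha\mapsto\alpha^{q}+\alpha$ is $\fq$-linear and surjects $\fqs$ onto $\fq$, and the norm $y\mapsto y^{q+1}$ surjects $\fqs^{*}$ onto $\fq^{*}$; combining these shows $\{y(P):P\in\supp(D)\}=\fqs^{*}$. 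Hence self-duality is equivalent to $y^{2\beta+1}\equiv 1$ on all of $\fqs^{*}$, i.e.\ $(q^{2}-1)\mid(2\beta+1)$; since $2\beta+1$ is odd, this forces $q^{2}-1$ odd and therefore $q$ even. The main obstacle is the bookkeeping in the divisor identity $(\omega)-(2G-D)=-(2\beta+1)(y)^{\cH}$, where the different, the conorm lift of $x^{q^{2}}-x$, and the principal divisor of $y$ must be combined cleanly.
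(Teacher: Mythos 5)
Your proof is correct and follows essentially the same route as the paper: both apply Proposition~\ref{prop:isodual}(a) by realizing $2G-D$ as the divisor of an explicit logarithmic-type differential built from $x^{q^2}-x$ and the different of $\cH/\fqs(x)$, then read off the twist vector from residues and the values $y(P)^{2\beta+1}=z(P)$. The only (cosmetic) difference is that you absorb the power of $y$ into the differential $\eta=y^{2\beta+1}\,d(x^{q^2}-x)/(x^{q^2}-x)$ so that $(\eta)=2G-D$ exactly, whereas the paper uses $\tfrac{1}{t}\,dt$ with $t=(x^{q^2}-x)/(x^q+x)$ and then twists by the function $z$ via Proposition~\ref{prop2210}.
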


\begin{proof}
We will use $(a)$ in Proposition \ref{prop:isodual}. Let $\mathcal{H}= \fqs(x,y)$ be the Hermitian function field defined by the affine equation $y^{q+1}=x^q+x$ and $\mathcal{F}=\fqs(x)$ the rational function field.
Consider the function $t \in \ff_p(x)$, where $p=\textrm{Char}(\ff_q)$, 
given by
\begin{equation} \label{eq:t}
	t=\frac{x^{q^2}-x}{x^q+x}.
\end{equation}	
Then, we clearly have that $v_{P}(t)=1$ for any $P \in \supp(D)$. Thus, from Proposition~8.1.2 in \cite{Stichbook09}, the Weil differential $\eta:=\frac 1t dt$ satisfies $v_{P}(\eta)=-1$ and ${\rm res}_{P}(\eta)=1$ for any $P \in \supp(D)$. Hence, if $W:=(\frac 1t dt)^\cH$ is the canonical divisor of the differential $\frac 1t dt$ in $\cH$, we have that 
	$$C_\cL(D, G)^\perp = C_\cL(D, D-G+W).$$

Now, we explicitly compute $W=\left(\frac 1t dt\right)^\cH = \left(\frac 1t \frac{dt}{dx}dx\right)^\cH$. 
From \eqref{eq:t} 
we have 
	$$dt = -\frac{x^{q^2}+x^q}{(x^q+x)^2} dx = -(x^q+x)^{q-2} dx.$$ 
Hence, from $\diff(\cH/\cF) = \sum_{i=1}^q q Q_i + q Q_\infty$, we compute the divisor $W$
\begin{align*}
W &= \Big(-(x^q+x)^{q-2} \, \frac{x^q+x}{x^{q^2}-x} \, dx \Big)^\cH  \\
  &= \Big( \frac{(x^q+x)^{q-1}}{x^{q^2}-x} \Big)^\cH + (dx)^\cH  \\
  &= (q-1)(x^q+x)^\cH-(x^{q^2}-x)^\cH-2(x)_\infty^\cH+	\diff(\cH/\cF)  \\
  &= (q^2-1) \Big(\sum_{i=1}^q Q_i-qQ_\infty \Big) - \Big(D+\sum_{i=1}^q (q+1)Q_i-(q^3+q^2)Q_\infty \Big) \\ 
  & \qquad - 2(q+1)Q_\infty+q\sum_{i=1}^q Q_i + q Q_\infty  \\
  &=-D+(q^2-2)\sum_{i=1}^q  Q_i+(q^2-2)Q_\infty.
\end{align*}
Now, we notice that $2G-D-W$ equals  
$$(q^3+q^2-2q-2 -2q\b ) Q_\infty + 2\beta \sum_{i=1}^q Q_i-D- \big(-D+(q^2-2) \sum_{i=1}^q  Q_i+(q^2-2)Q_\infty \big)$$ 
and, hence, we have that
\begin{align*}
2G-D-W 
&=(q^3-2q -2q\b ) Q_\infty+(2\b-q^2+2)  \sum_{i=1}^q Q_i =(y^{2\beta +2-q^2})^\cH.
\end{align*}
Let	$z:=y^{2\beta +2-q^2} $. Then, we have obtained that $D-G+W=G-(z)^\cH$ and we can show that the code is iso-dual. In fact, it is ${\bf x}$-iso-dual since
\begin{align*}
C_\cL(D, G)^\perp = C_\cL(D, D-G+W) 
				  = C_\cL(D, G-(z)^\cH ) 
				  = {\bf x} \cdot C_\cL(D, G),
\end{align*} 
where $ {\bf x}=(z(P_1), \ldots, z(P_n)) \in (\fqs^*)^n$ is the vector given by the computation of the function $z$ in all the places $P$ in $\supp(D)$.

The length of the code is given by 
$n=\deg(D)=q^3-q$. The dimension  is $k=\tfrac 12 n$ since the code is iso-dual. 
By \eqref{dimcode}, the minimum distance satisfies $d\ge n-\deg (G)=\frac{q^3-q^2+2}{2}$.

Finally, $C_\cL(D, G)$ is self-dual if and only if the vector ${\bf x}$ equals $(1,\ldots,1)$. Let 
	$$x_i=z(P_i)= y(P_i)^{2\b +2-q^2}\in \fqs.$$ 
Then, 
$C_\cL(D, G)$ is 
self-dual if and only if 
	$$x_i^{q^2-1}= x_i^{2\b +2-q^2 }=1$$ 
for every $i=1,\ldots,n$.
Choosing $\b$ such that $q^2-1$ divides $2\b +1$ 
we obtain that the code $C_\cL(D, G)$ is self-dual.
\end{proof}

For example, for any choice of $\beta \in \Z\smallsetminus \{0\}$ 
we get Hermitian iso-dual AG-codes (and self-dual AG-codes if we further take 
$\beta=\tfrac 12((q^2-1)t-1)$ for some $t\in \N$ ) 
with parameters $[60,30,\ge \! 25]$ over  $\mathbb{F}_{4^2}$, $[720, 360, \ge \! 325]$ over $\mathbb{F}_{9^2}$, $[4080, 2040, \ge 1921]$ over $\mathbb{F}_{16^2}$ and $[15600, 7800, \ge 7501]$ over $\mathbb{F}_{25^2}$.

\section{Lifting iso-dual AG-codes on function field extensions} \label{sec:lifting}
Let $\cM/\cF$ be a finite and separable extension of function fields defined over $\fq$. Here we provide a construction that allows to lift an iso-dual dual AG-code defined over $\cF$ to an iso-dual AG-code defined over $\cM$. 

Given an iso-dual AG-code $C_{\cL}(D,G)$ over $\cF$,  we want to lift it to an iso-dual code over $\cM$, that is,  we want to define a code
$\tilde C_\cL =C_\cL(\tilde D,\tilde G)$ over $\cM$ such that for some ${\bf \tilde x} \in \fq^n$ we have
	$$ 
	C_{\cL}(\tilde D, \tilde G)^\perp = \tilde {\bx} \cdot C_{\cL}(\tilde D, \tilde G).$$

In the following result, we give conditions on the function field extension $\cM / \cF$ for an iso-dual code defined over $\cF$ to be lifted to an iso-dual code defined over $\cM$.

\begin{theorem} \label{teolevantado}
	Let $\cM/\cF$ be a finite separable extension of function fields over $\fq$ of degree $m\geq 2$ with genera $g_{_\cM} $ and $g_{_\cF}$, respectively. Let $n\geq 1$ be an even integer and suppose that $\{P_1, \dots, P_n\}$ and $\{Q_1, \dots, Q_r\}$ are disjoint set of places of $\cF$ such that:
		\begin{enumerate}[$(a)$]
			\item $P_1, \dots, P_{n}$ are rational places and  $P_i$  splits completely in $\cM/\cF$ for $1\leq i\leq n$, \sk
			
			\item  the extension $\cM/\cF$ is unramified outside the set $\{Q_1, \dots , Q_{r}\}$ for some $r \geq 1$, and \sk 
			
			\item for each $1\leq i\leq r$ and each place $R$ of $\cM$ lying over $Q_i$ the different exponent $d(R|Q_i)$ is even.
		\end{enumerate} 
Let $(\b_1, \ldots, \b_r) \in \Z^r$ be a non zero $r$-tuple and consider the divisors of $\cF$
		$$D= \sum_{i=1}^n P_i  \qquad \text{and} \qquad G=\sum_{i=1 }^r \b_i Q_i.$$
If the divisor $2G-D$ is equivalent to a canonical divisor, then  the AG-code $C_{\cL}(D,G)$ defined over $\cF$ is iso-dual and the AG-code $C_{\cL}(\tilde D, \tilde G)$ defined over $\cM$ where
	$$\tilde D = \con_{\cM/\cF}(D) \qquad \text{and} \qquad \tilde G = \con_{\cM/\cF}(G) + \tfrac 12 \mathrm{Diff}(\cM/\cF)$$
is  also iso-dual.
\end{theorem}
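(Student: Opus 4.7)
The strategy is to apply part $(a)$ of Proposition~\ref{prop:isodual} on $\cM$, so the goal reduces to producing a canonical divisor $\tilde W$ of $\cM$ with $\tilde W \sim 2\tilde G - \tilde D$. The natural candidate will be the divisor of the cotrace of a Weil differential realizing the iso-duality of $C_\cL(D,G)$.

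First I would check that the data $(\tilde D, \tilde G)$ defines an AG-code at all. Since the $P_i$ split completely in $\cM/\cF$, each place lying above a $P_i$ is rational and unramified, so $\tilde D = \con_{\cM/\cF}(D)$ is a sum of $mn$ pairwise distinct rational places. Assumption $(c)$ ensures that $\tfrac12 \diff(\cM/\cF)$ is an integral divisor, so $\tilde G$ is a genuine divisor of $\cM$; assumption $(b)$ implies that $\supp(\diff(\cM/\cF))$ consists only of places lying above some $Q_i$, while $\con_{\cM/\cF}(G)$ is supported on places above the $Q_i$ as well. Since the $P_i$ and $Q_j$ are disjoint, $\supp(\tilde G)\cap\supp(\tilde D)=\emptyset$, so $C_\cL(\tilde D,\tilde G)$ is well-defined.

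Next I would use Proposition~\ref{prop:isodual}$(b)$: since $C_\cL(D,G)$ is iso-dual, there is a Weil differential $\omega$ of $\cF$ with $W:=(\omega)^\cF \sim 2G-D$. Setting $\tilde\omega:=\Cot_{\cM/\cF}(\omega)$, formula \eqref{eq:cotrace} gives
\begin{equation*}
(\tilde\omega)^\cM = \con_{\cM/\cF}(W) + \diff(\cM/\cF).
\end{equation*}
Because the conorm map sends principal divisors to principal divisors, it descends to divisor classes, so from $W \sim 2G-D$ I obtain $\con_{\cM/\cF}(W) \sim \con_{\cM/\cF}(2G-D) = 2\con_{\cM/\cF}(G) - \tilde D$. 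Hence
\begin{equation*}
(\tilde\omega)^\cM \sim 2\con_{\cM/\cF}(G) - \tilde D + \diff(\cM/\cF) = 2\tilde G - \tilde D,
\end{equation*}
and $(\tilde\omega)^\cM$ is a canonical divisor with the required equivalence. Proposition~\ref{prop:isodual}$(a)$ then yields that $C_\cL(\tilde D,\tilde G)$ is iso-dual, completing the proof.

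The only delicate point is the integrality of $\tfrac12 \diff(\cM/\cF)$, which is exactly where hypothesis $(c)$ is essential; without it the candidate divisor $\tilde G$ would not even be a divisor of $\cM$, and the whole construction would fail. Everything else is bookkeeping with the conorm, cotrace, and the equivalence of divisors.
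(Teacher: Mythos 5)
Your proof is correct, and its engine is the same as the paper's: take a Weil differential $\omega$ realizing the iso-duality of $C_\cL(D,G)$ via Proposition~\ref{prop:isodual}$(b)$, pass to its cotrace, and use \eqref{eq:cotrace} together with the compatibility of the conorm with principal divisors and hypothesis $(c)$ to identify $(\Cot_{\cM/\cF}(\omega))$ with $2\tilde G-\tilde D$ up to linear equivalence. Where you diverge is in how you conclude: you work entirely at the level of divisor classes and invoke Proposition~\ref{prop:isodual}$(a)$ as a black box, whereas the paper re-derives the iso-duality by hand, writing $(\eta)=2G-D+(f)^{\cF}$, splitting into two cases according to whether $\supp(D)$ meets $\supp(2G+(f)^{\cF})$, and chasing explicit vectors through Propositions~\ref{prop:AGdual} and~\ref{prop2210}. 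Your route avoids that case distinction (the class-level formulation absorbs it) and is shorter; the paper's computation buys an explicit description of the iso-duality vector $\tilde{\mathbf{x}}$ in terms of local components of the cotrace at the places over the $P_i$, which your argument does not produce but which the theorem's statement does not require. Your preliminary checks --- that $\tilde D$ consists of $mn$ pairwise distinct rational places, that $(c)$ makes $\tfrac12\diff(\cM/\cF)$ an honest divisor, and that $(b)$ plus the disjointness of $\{P_i\}$ and $\{Q_j\}$ gives $\supp\tilde D\cap\supp\tilde G=\emptyset$ --- are exactly the points that need verifying, so I see no gap.
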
 

\begin{proof}
From \eqref{isodualcond} of Proposition \ref{prop:isodual} we see that the AG-code $C_\rr(D,G)$ is iso-dual. Let us now consider the divisor $\tilde D$ of $\cM$ defined as the conorm of $D$ in $\cM/\cF$, that is 
		$$\tilde D = \text{Con}_{\cM/\cF}(D) = \sum_{i=1}^n\sum_{j=1}^m R_{i,j},$$
where $R_{i,1},\ldots,R_{i,m}$ are all the places of $\cM$ lying over $P_i$ for each $i=1,\ldots,n$. 

By hypothesis we have that there exist a Weil differential $\eta$ of $\cF$ and an element $f\in \cF$ such that	
	$$(\eta)=2G-D+(f)^F.$$ 
Let $\tilde{\eta}=\Cot (\eta)$ be the cotrace of $\eta$ in $M$.  
Then, by \eqref{eq:cotrace}, 
we have that
\begin{align*}
	(\tilde{\eta})  &=\con_{\cM/\cF}((\eta)) + \mathrm{Diff}(\cM/\cF)  \\
	&= \con_{\cM/\cF}(2G-D+(f)^F) + \mathrm{Diff}(\cM/\cF)  \\
 &= -\tilde{D} + \con_{\cM/\cF}(2G+(f)^F) + \mathrm{Diff}(\cM/\cF).   
\end{align*}

Consider the divisor $A=2G+(f)^F$, hence $(\eta)=A-D$. 
There are two possibilities: either $\supp(D)\cap\supp(A)=\varnothing$ or not.

In the first case, we clearly have that $v_{\tilde R}(\tilde \eta)=-1$ for all ${\tilde R}\in \supp(\tilde{D})$. 
Since $ \mathrm{Diff}(\cM/\cF)=\sum_{i=1}^r\sum_{S|Q_i}d(S|Q_i)S$ and, by hypothesis, $d(S|Q_i)$ is even for every $i=1,\ldots, n$, we have a well defined divisor 
	\[\tilde G = \con_{\cM/\cF}(G) + \tfrac 12 \mathrm{Diff}(\cM/\cF),\] 
of $\cM$. Since $\tilde{D}$ and $\tilde{G}$ have disjoint supports and $v_{\tilde R}(\tilde \eta)=-1$ for all ${\tilde R}\in \supp(\tilde{D})$, 
from Proposition \ref{prop:isodual}
we have that
	\begin{equation}\label{xdual1}
			C_{\cL}(\tilde D, \tilde G)^\perp =\mathbf{a}\cdot C_{\cL}( \tilde D, \tilde D- \tilde G + (\tilde{\eta})),
	\end{equation}
where $\mathbf{a}=(\tilde{\eta}_{R_{1,1}}(1),\ldots,\tilde{\eta }_{R_{n,m}}(1))\in \fq^{nm}$. On the other hand, by the definition of 
$\tilde{G}$, we also have that
	\begin{align*} 
		(\tilde{\eta}) & = \con(2G-D+(f)^\cF)+ \mathrm{Diff}(\cM/\cF) \\
                       & = 2\con_{\cM/\cF}(G)-\tilde{D} + \con_{\cM/\cF}((f)^\cF) +  \mathrm{Diff}(\cM/\cF) \\
					   & = 2 \tilde G - \tilde D + (f)^\cM,
	\end{align*}
where $(f)^\cM=\con_{\cM/\cF}((f)^\cF)$ by Proposition 3.1.9 of \cite{Stichbook09}. Thus $\tilde D-\tilde G+ (\tilde{\eta})=\tilde G + (f)^\cM$ and this means that $\tilde D-\tilde G+ (\tilde{\eta})\sim\tilde G$ so that
	\begin{equation}\label{xdual2}
		C_{\cL}( \tilde D, \tilde D- \tilde G + (\tilde{\eta}))=\mathbf{b} \cdot C_{\cL}(\tilde D, \tilde G),
	\end{equation}
for some vector ${\bf b}=(b_1,\ldots,b_{nm}) \in \fq^{nm}$. If
	\[\mathbf{x}=(\tilde{\eta}_{R_{1,1}}(1)b_1,\ldots,\tilde{\eta }_{R_{n,m}}(1)b_{nm})\in \fq^{nm},\]
from \eqref{xdual1} and \eqref{xdual2} we deduce that 
	\[C_{\cL}(\tilde D, \tilde G)^\perp=\mathbf{x}\cdot C_{\cL}(\tilde D, \tilde G),\]
that  is  $C_{\cL}(\tilde D, \tilde G)$ is an iso-dual AG-code over $\cM$ which is a lifting to $\cM$ of the iso-dual AG-code $C_{\cL}(D,G)$ over $\cF$.
		
If the second case occurs, that is if $\supp(D)\cap\supp(A)\neq\varnothing$, we know that there exists a divisor $A'\sim A$ such that $\supp(D)\cap\supp(A')=\varnothing$. Thus, there exists an element  $h\in \cF$ such that $2G+(f)^\cF=A=A'+(h)^\cF$ and then 
		\[
			(h^{-1}\eta) = (\eta)-(h) = -D+2G+(f)^\cF-(h)^\cF = -D+2G+(\tfrac fh)^\cF = -D+A',
		\]   
		with $\supp(D)\cap\supp(A')=\varnothing$. Now, $\omega=h^{-1}\eta$ is a Weil differential of $\cF$ and thus, if we define $\tilde{\omega}= \Cot(\omega)$, we have
		\begin{align*}
			(\tilde{\omega})  &=\con_{\cM/\cF}((\omega)) + \mathrm{Diff}(\cM/\cF)  \\
			&= \con_{\cM/\cF}(-D+A') + \mathrm{Diff}(\cM/\cF)  \\
			&= -\tilde{D} + \con_{\cM/\cF}(A') + \mathrm{Diff}(\cM/\cF),  
		\end{align*}
	where  $\supp(D)\cap\supp(A')=\varnothing$. Therefore, we are in the conditions of the first case just considered and we see that the iso-dual AG-code $C_{\cL}(D,G)$ defined over $\cF$  can also be lifted to an iso-dual AG-code defined over $\cM$.
\end{proof}

 \begin{remark}\label{canonicallifting}
	From the proof of Theorem \ref{teolevantado} we see that starting with two divisors $D=P_1+\cdots+P_n$ and $G$ of the function field $\cF$ with disjoint support such that the divisor $2G-D$ is equivalent to a canonical divisor, not only the lifted code $\mathfrak{L}_{\cM/\cF}(C_\cL(D,G))=C_\rr(\tilde{D},\tilde{G})$ is  iso-dual but also the divisor $2\tilde{G}-\tilde{D}$ is a canonical divisor of the function field $\cM$.  In view of Remark \ref{stichexample} this is an interesting property of our construction of iso-dual AG-codes given in Theorem \ref{teolevantado}. 
\end{remark}
 
\begin{remark}
Notice that in the above theorem we start with two divisors $D$ and $G$ of $\cF$ such that the divisor $2G-D$ is equivalent to a canonical divisor. Even if we have that the AG-code $C_\rr(D,G)$ is iso-dual we can not ensure that the divisor $2G-D$ is equivalent to a canonical divisor of $\cF$, as we have seen in Remark \ref{stichexample}. However if $\cF$ is a rational function field and $\deg(G)=(n-2)/2$, then the divisor $2G-D$ is a canonical divisor by Proposition \ref{prop:isodual}. There are other instances (see, for example, \cite{DM1985} and the comment made in Example 4.1.2 of \cite{Stich1988}) where having an iso-dual AG-code $C_\rr(D,G)$ is equivalent to having the divisor $2G-D$ being equivalent to a canonial divisor, so that we can apply our Theorem 4.1 in these cases. 

Notice also that in the above theorem we use extensions of function fields where the different exponents are even integers. This is not a huge constraint because the condition on the evenness of the different exponents holds in several well-known situations like: 
\begin{enumerate}[($i$)]
	\item the ones defined by Artin-Schreier extensions in odd characteristic, \sk 
	
	\item tamely ramified extensions of odd degree, \sk 
	
	\item the case of the so called weakly ramified extensions (see Definition~7.4.12 of \cite{Stichbook09}), and also \sk
	
	\item the case where there is a rational place $P$ of $\cF$ totally ramified in $\cM$ and the extension $\cM/\cF$ is unramified outside the place $P$. This is so because in this case if $Q$ is the only place of $\cM$ lying over $P$, then $Q$ is also a rational place and then by Hurwitz genus formula we have that
	\begin{equation}\label{oneramified}
		2g_{_\cM}-2=(2g_{_\cF}-2)[\cM:\cF]+d(Q|P),
	\end{equation}
	showing that $d(Q|P)$ is even. 
\end{enumerate}
\end{remark}

It is also worth noticing that in the presence of more than one place of $\cF$ ramified in $\cM$, either a one-point or a multi-point iso-dual code $C_\cL(D,G)$ over $\cF$ 
can be lifted to an iso-dual code over $\cM$, according to the choice of the number of zero coordinates in the $r$-tuple $(\beta_1,\ldots,\beta_r)$ in the above theorem. However, despite having some flexibility in choosing the $r$-tuple  $(\beta_1,\ldots,\beta_r)$ defining the divisor $G$, these integers must satisfy the equation
	\[ \sum_{i=1}^r\beta_i\deg(Q_i) = \tfrac{1}{2}(n+2g_{_\cF}-2),\]  
according to ($b$)
of Proposition \ref{prop:isodual}, because we are requiring that $C_\cL(D,G)$ is an iso-dual code.

In view of the result obtained in Theorem \ref{teolevantado},  
we make the following definition and notation.

\begin{definition} \label{lifted}
Given a function field extension $\cM/\cF$ and an AG-code $C_{\cL}(D,G)$ over $\cF$ as in Theorem \ref{teolevantado}, the AG-code $C_{\cL}(\tilde D, \tilde G)$, where 
	$$\tilde D = \text{Con}_{\cM/\cF}(D) \qquad \text{and} \qquad \tilde G = \con_{\cM/\cF}(G) +\tfrac{1}{2} \mathrm{Diff}(\cM/\cF),$$ 
will be called the {\it lifted code} (or \textit{the lift}) of $C_{\cL}(D,G)$ 
to $\cM$, and we will denote it by $\mathfrak{L}_{\cM/\cF}(C_\cL(D,G))$. 
\end{definition}

With Definition \ref{lifted}, Theorem \ref{teolevantado} says that if $C_{\cL}(D,G)$ is an iso-dual AG-code over $\cF$ with $2G-D$ equivalent to a canonical divisor of $\cF$, then the lifted code $\mathfrak{L}_{\cM/\cF}(C_\cL(D,G))$ defined over $\cM$ is also  iso-dual.

We now give an estimate of the parameters of the lifted code of an iso-dual code.
\begin{corollary} \label{parameters}
Under the same conditions of Theorem \ref{teolevantado}, if $mn> 2g_{\mathcal{M}}-2$ 
then $\mathfrak{L}_{\cM/\cF}(C_\cL(D,G))$ is  an $[\tilde{n}, \tilde{k}, \tilde{d}]$-code where  
	$$\tilde{n}=mn, \qquad \tilde{k} = \tfrac12 mn, \qquad \text{and} \qquad \tilde{d} \ge \tfrac 12 (mn-2g_{\mathcal{M}}+2).$$
\end{corollary}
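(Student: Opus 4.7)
The plan is to verify the three parameters $(\tilde n, \tilde k, \tilde d)$ of the lifted code $\mathfrak{L}_{\cM/\cF}(C_\cL(D,G)) = C_\cL(\tilde D, \tilde G)$ separately, using Theorem \ref{teolevantado} as the key structural input that guarantees $C_\cL(\tilde D, \tilde G)$ is iso-dual over $\cM$.

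For the length, I would invoke hypothesis $(a)$ of Theorem \ref{teolevantado}: each rational place $P_i$ splits completely in $\cM/\cF$ into exactly $m$ rational places, so $\tilde D = \con_{\cM/\cF}(D)$ is a sum of $mn$ distinct rational places of $\cM$, giving $\tilde n = \deg(\tilde D) = mn$.

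For the dimension, the most direct route is a linear-algebraic one: any iso-dual code $\cC$ of length $N$ satisfies $\dim \cC = \dim(\mathbf{x}\cdot\cC) = \dim \cC^\perp = N - \dim \cC$, hence $\dim \cC = N/2$; applied to the lifted code this gives $\tilde k = mn/2$. Alternatively, one can compute via Riemann-Roch: by Proposition \ref{prop:isodual}($b$) applied to $C_\cL(\tilde D, \tilde G)$ we have $\deg(\tilde G) = \tfrac12(mn + 2g_\cM - 2)$, and the hypothesis $mn > 2g_\cM - 2$ rearranges to give both $2g_\cM - 2 < \deg(\tilde G)$ and $\deg(\tilde G) < mn$, so Proposition \ref{prop:dimcode}($b$) yields $\tilde k = \deg(\tilde G) + 1 - g_\cM = mn/2$.

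For the minimum distance, I would simply apply the Goppa bound $\tilde d \ge \tilde n - \deg(\tilde G)$ from Proposition \ref{prop:dimcode} together with the value of $\deg(\tilde G)$ already obtained, getting $\tilde d \ge mn - \tfrac12(mn + 2g_\cM - 2) = \tfrac12(mn - 2g_\cM + 2)$. There is no genuine obstacle in this corollary: the hypothesis $mn > 2g_\cM - 2$ is tailored precisely to place $\deg(\tilde G)$ in the range where Riemann-Roch gives the exact dimension and the Goppa bound is non-trivial, while the iso-dual property supplied by Theorem \ref{teolevantado} does all the structural work.
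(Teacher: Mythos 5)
Your proposal is correct and follows essentially the same route as the paper: length from the complete splitting of the $P_i$, dimension $\tilde k = \tfrac12 mn$ from the iso-dual property (with $\deg\tilde G = \tfrac12(mn+2g_\cM-2)$ via Proposition \ref{prop:isodual}), and the Goppa bound $\tilde d \ge \tilde n - \deg\tilde G$ for the minimum distance. The alternative Riemann--Roch computation of $\tilde k$ you sketch is a harmless extra, not a different argument.
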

\begin{proof}
Clearly, $\tilde{n}=mn$ because the support of $\tilde{D}$ consists of exactly $mn$ (rational) places of $\cM$, by definition of $\tilde{D}$. Also, since $C_{\cL}(\tilde D, \tilde G)$ is an iso-dual code, it holds that $\tilde{k}=\frac12 {mn}$ and from Proposition \ref{prop:isodual} we have that
	\[\deg(\tilde{G}) = \tfrac 12 (mn+2g_{\mathcal{M}}-2). \] 	
From Proposition \ref{prop:dimcode} 
\[\tilde{d} \ge \tilde n -\deg \tilde{G} = mn-\tfrac 12(mn+2g_{\mathcal{M}}-2) = \tfrac 12 (mn-2g_{\mathcal{M}}+2),\]
as we wanted to show.
\end{proof}

\section{Lifting iso-dual codes from the rational function field} \label{sec: from rational}
We use now Theorem \ref{teolevantado} to construct iso-dual AG-codes in function fields by lifting rational iso-dual codes, that is iso-dual codes over the rational function field $\fq(x)$. This is the simplest and most favorable situation to apply our method because of items \eqref{isorat} and \eqref{isoratcan} of Proposition~\ref{prop:isodual}.

In $\fq(x)$ we choose pairwise distinct rational places $P_1, \dots, P_n$ for $n \geq 4$ even and we denote by $P_\infty$  the only pole of $x$ in $\fq(x)$. By considering the divisors 
\[D=P_1+\cdots+P_n \qquad \text{and} \qquad G=\tfrac12 (n-2) P_\infty,\]
of $\fq(x)$ we have not only that the AG-code $C_\cL(D,G)$ is an iso-dual code over $\fq(x)$ but also that the divisor $2G-D$ is a canonical divisor according to items \eqref{isorat} and \eqref{isoratcan} 
of Proposition~\ref{prop:isodual}.
Now, 
we are going to choose suitable places $P_1, \dots, P_n$ in $\fq(x)$ in order to be able to lift the iso-dual $C_\cL(D,G)$ over $\fq(x)$ to some elementary abelian $p$-extensions, and also to some Kummer extensions in the next section.

Some of the examples given in this section include the important class of {\it maximal function fields}, that is, a function field $\cF$ over $\fqs$ of genus $g(\cF)$ such that its number $\cF(\fqs)$ of $\fqs$-rational points attains the upper bound on the Hasse-Weil bound, that is
$$ \cF(\fqs)=q^2+1+2g(\cF)q.$$
\begin{remark}
A distinguished  example of maximal function field is the Hermitian function field which was already considered in Section \ref{sec:isoHerm}  as a tamely ramified extension of $\ff_{q^2}(x)$. In that case the constructed iso-dual code was a multi-point AG-code but here, with the Hermitian function field considered as an elementary abelian $p$-extension of $\ff_{q^2}(x)$, the constructed iso-dual code  will be a one-point AG-code. An important advantage we have with the elementary abelian $p$-extensions of $\fq(x)$, is that it is  rather easy to find a generating matrix for the lifted iso-dual codes in many cases. 
\end{remark}

\subsection*{Elementary abelian $p$-extensions}
Let $f$ be a polynomial over $\ff_{q^s}$, $s\ge 1$, of degree $m>0$ such that $(m, q)=1$ and let $0\neq \mu\in \ff_{q^s}$. Suppose the polynomial $t^q+\mu t\in\ff_{q^s}[t]$
splits completely into linear factors over $K=\ff_{q^s}$. A function field of the form $\cF=K(x,y)$ where 
\begin{equation} \label{eq:p-abeliana}
	y^q+\mu y=f(x),
\end{equation}
is called an {\em elementary abelian $p$-extension}  of  $K(x)$. 
From Proposition 6.4.1 of \cite{Stichbook09} we have that the extension $\cF/K(x)$ is of degree $q$ and  genus 
$$g = \tfrac 12 (q-1)(m-1).$$ 
The only pole $P_\infty$ of $x$ in $K(x)$ is totally ramified in $\cF$ and the extension $\cF/K(x)$ is unramified outside the place $P_\infty$. Since $P_\infty$ is a rational place we have from \eqref{oneramified} that the different exponent
\begin{equation}\label{expdif}
d(Q_\infty|P_\infty)=(q-1)(m+1),
\end{equation}
is even, where $Q_\infty$ is the only place of $\cF$ lying over $P_\infty$.

We now give sufficient conditions to get lifted iso-dual codes over a general elementary abelian $p$-extension.

\begin{proposition}
For any $s\in \N$ let us consider the elementary abelian $p$-extension 
$\cF=\ff_{q^s}(x,y)$ as in \eqref{eq:p-abeliana}. Let $n \geq 4$ be an even integer and suppose that $P_1,\ldots, P_n$ are $n$ different rational places of  $\ff_{q^s}(x)$ such that each $P_i$ splits completely in $\cF / K(x)$. Then there exists a lifted iso-dual AG-code over $\cF$ with parameters
	$$[nm, \, \tfrac 12 nm, \, \ge \tfrac 12 (m(n-q+1)+q-3)],$$	
and generating matrix 
\begin{equation} \label{genmatrix}
	M = \big( \alpha_i^a \beta_{i,q}^b : 0\le a, 0 \le b<q, qa+mb<r \big)_{1\le i\le n},
\end{equation}
with $r=\tfrac 12(q(n+m-1)-m-1)$, where $\alpha_i$ is such that $P_i$ is the only  zero of $x-\alpha_i$ in $\ff_{q^s}(x)$, provided that  $\beta_{i,q},\ldots,\beta_{i,q}$ are the $q$ different roots of $T^q+  \mu T =f(\alpha_i)$ for $1\le i \le n$. 
\end{proposition}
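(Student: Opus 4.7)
The plan is to apply the lifting result Theorem~\ref{teolevantado} to a very simple iso-dual code on the rational function field $\ff_{q^s}(x)$. With $n\ge 4$ even, set $D=P_1+\cdots+P_n$ and $G=\tfrac{1}{2}(n-2)P_\infty$; by item~$(c)$ of Proposition~\ref{prop:isodual}, the AG-code $C_\cL(D,G)$ is iso-dual over $\ff_{q^s}(x)$.

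Next, I would verify the three hypotheses of Theorem~\ref{teolevantado} for the extension $\cF/\ff_{q^s}(x)$. Condition~$(a)$ is the standing hypothesis on the $P_i$. Condition~$(b)$ follows from the standard fact that an extension defined by $y^q+\mu y=f(x)$ is unramified outside the pole $P_\infty$ of $x$, which is totally ramified with unique place $Q_\infty$ above it. Condition~$(c)$ asserts the evenness of the different exponent $d(Q_\infty\mid P_\infty)=(q-1)(m+1)$ recorded in~\eqref{expdif}; this is immediate from $\gcd(m,q)=1$, since in odd characteristic $q-1$ is even, while in characteristic~$2$ coprimality forces $m$ odd, hence $m+1$ even.

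Applying Theorem~\ref{teolevantado} then produces the iso-dual lifted code $\mathfrak{L}_{\cF/\ff_{q^s}(x)}(C_\cL(D,G))=C_\cL(\tilde D,\tilde G)$, where $\tilde D=\con_{\cF/\ff_{q^s}(x)}(D)$ consists of $qn$ distinct rational places and
\[ \tilde G=\con_{\cF/\ff_{q^s}(x)}(G)+\tfrac{1}{2}\diff(\cF/\ff_{q^s}(x))=rQ_\infty,\quad r=\tfrac{1}{2}\bigl(q(n+m-1)-m-1\bigr), \]
using that $P_\infty$ is totally ramified with ramification index $q$. Iso-duality forces the dimension to be half the length, and the generic bound $\tilde d\ge \tilde n-\deg\tilde G$ delivers the claimed minimum distance estimate after rearrangement.

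Finally, to exhibit the generator matrix $M$, I would use the standard monomial basis $\{x^a y^b:a\ge 0,\ 0\le b<q,\ qa+mb\le r\}$ of the Riemann-Roch space $\cL(rQ_\infty)$. Since $v_{Q_\infty}(x)=-q$ and $v_{Q_\infty}(y)=-m$ (the latter by reading pole orders from $y^q+\mu y=f(x)$ using $\gcd(m,q)=1$), every such monomial lies in $\cL(rQ_\infty)$; linear independence follows from $\{1,y,\ldots,y^{q-1}\}$ being an $\ff_{q^s}(x)$-basis of $\cF$, and a cardinality count against $\ell(rQ_\infty)=r+1-g_\cM$ shows these monomials span. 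Evaluating this basis at the $q$ places above each $P_i$ — cut out by the roots $\beta_{i,1},\ldots,\beta_{i,q}$ of $T^q+\mu T=f(\alpha_i)$ — produces the matrix in~\eqref{genmatrix}. The main technical point is this basis/dimension count; the remaining steps are a direct application of Theorem~\ref{teolevantado} together with standard facts about Artin-Schreier extensions.
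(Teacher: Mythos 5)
Your proposal is correct and follows essentially the same route as the paper: lift the rational iso-dual code $C_\cL(P_1+\cdots+P_n,\tfrac12(n-2)P_\infty)$ via Theorem \ref{teolevantado}, use that the only ramified place is $P_\infty$ with even different exponent $(q-1)(m+1)$ to get $\tilde G = rQ_\infty$, read off the parameters from Corollary \ref{parameters}, and evaluate the monomial basis of $\cL(rQ_\infty)$ at the places above the split places $P_i$ to obtain the matrix \eqref{genmatrix}. The only difference is that the paper simply cites Proposition 6.4.1 of \cite{Stichbook09} for that basis, whereas your justification via the count $\ell(rQ_\infty)=r+1-g$ is only valid when $r\ge 2g-1$; quoting the standard fact (the pole numbers at $Q_\infty$ form the semigroup generated by $q$ and $m$) gives the basis unconditionally.
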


\begin{proof}
By hypothesis we have that $P_1,\ldots, P_n$ are $n$ different rational places of  $\ff_{q^s}(x)$ with $n\geq 4$ even and such that each $P_i$ splits completely in $\cF / \ff_{q^s}(x)$. By taking the divisors
	\[D=P_1+\cdots+P_n \qquad \text{and} \qquad G=\tfrac12 (n-2) P_\infty,\]
we have that the AG-code $C_\cL(D,G)$ is an iso-dual code over $\ff_{q^s}(x)$ according to ($c$) of Proposition \ref{prop:isodual}.  We are now in the conditions of Theorem~\ref{teolevantado} and then we have that $\mathfrak{L}_{\cF/\ff_{q^s}(x)}(C_\cL(D,G))$ is an iso-dual code over $\cF$. In fact we have that 
	$$ \mathfrak{L}_{\cF/\ff_{q^s}(x)}(C_\cL(D,G))=C_\cL(\tilde{D},\tilde{G})=C_\cL(\tilde{D},rQ_\infty), $$
where $\tilde{D}=\sum_{i=1}^n\sum_{j=1}^qR_{i,j}|P_i$ and $R_{i,1},\ldots,R_{i,q}$ are the all the places of $\cF$ lying over $P_i$ for $i=1,\ldots,n$ and 
	\[ r=\tfrac 12(q(n+m-1)-m-1).\] 
The latter  is because 
	\[ \textrm{Diff}(\cF/K(x))=d(Q_\infty|P_\infty)Q_\infty=(2q+(q-1)(m-1)-2) Q_\infty,\]
and so
\begin{align*}
	\tilde{G} &= \con_{\cF/\ff_{q^s}(x)}(\tfrac 12 (n-2) \, P_\infty) + \tfrac 12 \textrm{Diff}(\cF/\ff_{q^s}(x))\\
	&= \tfrac 12 q(n-2) \, Q_\infty + \tfrac 12 (2q+(q-1)(m-1)-2) \, Q_\infty\\
	&= \tfrac 12 (q(n+m-1)-m-1) \, Q_\infty.
\end{align*}

The parameters of the lifted code follow from Corollary \ref{parameters} and the expression of the genus of $\cF$.

On the other hand, from Proposition 6.4.1 of \cite{Stichbook09} we know that the set 
\begin{equation} \label{basisRR}
	\{x^ay^b: 0\leq a, 0\leq b< q, qa+mb\leq r\}, 
\end{equation}
is an $\ff_{q^s}$-basis of $\cL(rQ_\infty)$. Now for each $1\leq i\leq n$ the place $P_i$ is the zero of $x-\alpha_i$ in $\ff_{q^s}(x)$ and we are assuming that the polynomial $T^q+\mu T=f(\alpha_i)$ has $q$ different roots $\beta_{i,1},\ldots,\beta_{i,q}\in \ff_{q^s}$. Then by Kummer theorem both elements $x-\alpha_i$ and $y-\beta_{i,j}$ belong to $ R_{i,j}$ for $j=1,\ldots,q$ and then the residual class $(x^ay^b)(R_{i,j})$  is
\begin{equation} \label{eq: genmatrix}
	(x^ay^b)(R_{i,j}) = \alpha_i^a\beta_{i,q}^b.
\end{equation}
This gives us the explicit generator matrix $M$ of  $\mathfrak{L}_{\cF/\ff_{q^s}(x)}(C_\cL(D,G))$ as stated in \eqref{genmatrix}.
\end{proof}

The following concrete example shows that the estimates given in Corollary \ref{parameters} can not be improved in general.
\begin{example} \label{ex:844}
	Consider the polynomial $t^2+ t\in \ff_8[t]$ and let $\ff_8=\ff_2(\alpha)$ where $\alpha$ satisfies $\alpha^3+\alpha+1=0$. 
	The equation
	\[ y^2+y = x^3 \]
	defines an elementary abelian $2$-extension $\cF$ of $\ff_8(x)$ of genus $g=1$. It is easy to check using \textsc{Sage} \cite{sage} that the polynomial $T^2+T=f(\gamma)$ splits into $2$ different linear factors over $\ff_8$ if $\gamma\in\{0,\alpha+1,\alpha^2+1,\alpha^2+\alpha+1\}$. Therefore, we have a rational iso-dual AG-code of the form  					
	$$C_\cL(P_1+P_2+P_3+P_4,P_\infty)$$  
	that can be lifted to an iso-dual AG-code over $\cF_{\mu}$ where $P_1$ is the zero of $x$ in $\ff_8(x)$, $P_2$ is the zero of $x+\alpha+1$ in $\ff_8(x)$, $P_3$ is the zero of $x+\alpha^2+\alpha+1$ in $\ff_8(x)$ and $P_4$ is the zero of $x+\alpha^2+1$ in $\ff_8(x)$. By Corollary \ref{parameters} the lifted code is an AG-code of length $8$,  dimension $\tilde k= 4$ and minimum distance $\tilde d \geq 4$.
	
	We  show now that, in fact, $\tilde{d}=4$.  Since  in this case $q=2$, $m=3$ and $n=4$ we see that $r=4$ so that $\tilde{G}=4Q_\infty$ and we also have from \eqref{basisRR} that the set $\{1,x,x^2, y\}$ is an $\ff_8$-basis of $\cL(4Q_\infty)$. Using \eqref{genmatrix}  we form the matrix
	\[\left( \begin{array}{cccccccc}
		1 & 1 & 1 & 1 & 1 & 1 & 1 & 1\\
		0 & 0 & \alpha + 1 & \alpha + 1 & \alpha^{2} + \alpha + 1 & \alpha^{2} + \alpha + 1 & \alpha^{2} + 1 & \alpha^{2} + 1 \\
		0 & 0 & \alpha^{2} + 1 & \alpha^{2} + 1 & \alpha + 1 & \alpha + 1 & \alpha^{2} + \alpha + 1 & \alpha^{2} + \alpha + 1 \\
		0 & 1 & \alpha^{2} + \alpha & \alpha^{2} + \alpha + 1 & \alpha^{2} & \alpha^{2} + 1 & \alpha & \alpha + 1
	\end{array}  \right)\]
	which is a generator matrix of 
	$$C_\cL(\tilde{D},4Q_\infty) = \mathfrak{L}_{\cF_1/\ff_8(x)}(C_\cL(P_1+P_2+P_3+P_4,P_\infty)).$$ 
	With this matrix we can use now the Coding theory package of SAGE and see that $\tilde d=4$, as claimed. 
	
\end{example}

\subsection{Iso-dual codes over function fields covered by the Hermitian}
In the case of finite fields of quadratic cardinality there are  examples  of maximal function fields  which are elementary abelian $p$-extensions of $\ff_{q^2}(x)$ for any prime $p$. An  instance of this situation is the  function field $\cF$  defined by the equation
	\[ y^q+y=x^\ell, \]
where $\ell >1$ is a divisor of $q+1$, of genus 
	$$g=\tfrac 12 (q-1)(\ell-1)$$ 
(the case $\ell=q+1$ is the well known Hermitian function field over $\ff_{q^2}$). By Example~6.4.2 of \cite{Stichbook09} we have $1+(q-1)\ell$ rational places in $\ff_{q^2}(x)$  that splits in $\cF / \ff_{q^2}(x)$. Let $q \geq 3$ be odd and take $n=(q-1)\ell$. Then $n \ge 4$  is even and 
	\[qn = q(q-1)\ell > (q-1)(\ell-1)-2 = 2g-2.\]
The code $C_\cL(P_1+\cdots+P_n, \tfrac 12 (n-2) P_\infty)$ is a rational iso-dual code, by Proposition \ref{prop:isodual}, and the lifted code
	$$\mathfrak{L}_{\cF/\fq(x)}(C_\cL(P_1+\cdots+P_n, \tfrac 12 (n-2) P_\infty))$$ 
is an iso-dual AG-code over $\cF$ of length, dimension and minimum distance given by 
 	$$ \tilde{n}=q(q-1)\ell, \qquad \tilde{k} = \tfrac 12 q(q-1)\ell,  \qquad \text{and} \qquad 
 	\tilde{d} \ge \tfrac 12 ((q-1)^2\ell+q+1),$$
respectively.

\subsection{Iso-dual codes on the Suzuki curve}
Let $q=2^{2m+1}$ and $q_0=2^m, m \geq 1$. The Suzuki function field $\mathcal{S}_q=\mathbb{F}_{q^4}(x,y)$ is defined by the affine equation 
	\begin{equation} \label{eq: Sz}
		\mathcal{S}_q: \quad y^q + y = x^{q_0} (x^q + x).
	\end{equation}
This curve has genus 
	$$g(\mathcal{S}_q)=q_0(q-1)$$ 
and it is $\mathbb{F}_{q^4}$-maximal; with only one place at infinity, $P_\infty$, the only pole of $x$ and $y$.

We now show that we can lift iso-dual AG-codes over this function field.

\begin{proposition} \label{prop: Suzuki}
For $q =2^{2m+1}$ and $q_0=2^m$, with $m \geq 1$, there exist lifted iso-dual AG-codes on the curve  $\mathcal{S}_q$ over $\mathbb{F}_{q^4}$ 
with parameters satisfying
	\begin{equation} \label{eq: parameters Sz}
		\big[ q(q^3-q), \;  \tfrac 12 q(q^3-q), \; \ge \tfrac 12 (q^4-q^2-2q_0(q-1)+2) \big].
	\end{equation}
\end{proposition}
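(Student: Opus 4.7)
The plan is to apply Theorem \ref{teolevantado} to the Artin--Schreier extension $\mathcal{S}_q/\mathbb{F}_{q^4}(x)$ of degree $m=q$ defined by \eqref{eq: Sz}. The pole $P_\infty$ of $x$ is the unique ramified place, and it is totally ramified with a (necessarily rational) unique place $Q_\infty$ above it; every other place of $\mathbb{F}_{q^4}(x)$ is unramified in $\mathcal{S}_q$. Since $\mathbb{F}_{q^4}(x)$ has genus $0$ and $g(\mathcal{S}_q)=q_0(q-1)$, the Riemann--Hurwitz formula \eqref{eq: genus fla} gives
\[
d(Q_\infty|P_\infty) \;=\; 2g(\mathcal{S}_q)-2+2q \;=\; 2(q-1)(q_0+1),
\]
which is even, so hypothesis $(c)$ of Theorem \ref{teolevantado} is satisfied.

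Next I would produce $n=q^3-q$ rational places of $\mathbb{F}_{q^4}(x)$ that split completely in $\mathcal{S}_q$. Since $\mathcal{S}_q$ is $\mathbb{F}_{q^4}$-maximal, it has $\#\mathcal{S}_q(\mathbb{F}_{q^4})=q^4+1+2q_0(q-1)q^2$ rational places; discarding $Q_\infty$ leaves $q^4+2q_0(q-1)q^2$ that lie above finite rational places of $\mathbb{F}_{q^4}(x)$. A finite rational place $P_\alpha$ contributes $q$ rational places above precisely when it splits completely in this Artin--Schreier extension and none otherwise (any other unramified place above $P_\alpha$ has residue field strictly larger than $\mathbb{F}_{q^4}$), so the number of completely split finite rational places is $N_s=q^3+2q_0q(q-1)$. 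Since $N_s > q^3-q$ and $q^3-q$ is even (because $q$ is even), we may pick any such $n=q^3-q$ places $P_1,\dots,P_n$.

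With $D=P_1+\cdots+P_n$ and $G=\tfrac{n-2}{2}P_\infty$, item $(c)$ of Proposition \ref{prop:isodual} says that $C_\cL(D,G)$ is iso-dual over $\mathbb{F}_{q^4}(x)$. Taking $r=1$, $Q_1=P_\infty$ and $\beta_1=\tfrac{n-2}{2}$, all hypotheses of Theorem \ref{teolevantado} are met, so the lifted code $\mathfrak{L}_{\mathcal{S}_q/\mathbb{F}_{q^4}(x)}(C_\cL(D,G))$ is iso-dual over $\mathcal{S}_q$. Because $mn=q(q^3-q)=q^4-q^2$ comfortably exceeds $2g(\mathcal{S}_q)-2=2q_0(q-1)-2$, Corollary \ref{parameters} yields exactly the parameters $[q(q^3-q),\,\tfrac12 q(q^3-q),\,\ge\tfrac12(q^4-q^2-2q_0(q-1)+2)]$ recorded in \eqref{eq: parameters Sz}. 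The only non-routine steps are the parity of $d(Q_\infty|P_\infty)$ and the count of split places; both reduce to a single application of Riemann--Hurwitz and of the $\mathbb{F}_{q^4}$-maximality of $\mathcal{S}_q$, respectively.
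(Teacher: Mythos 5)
Your proof is correct and follows essentially the same route as the paper: a one-point rational iso-dual code $C_\cL(D,\tfrac12(n-2)P_\infty)$ supported on completely split places, lifted via Theorem \ref{teolevantado}, with parameters read off from Corollary \ref{parameters}. The only local difference is how the $n=q^3-q$ split places are obtained: the paper exhibits them explicitly via additive Hilbert 90 (the places $P_{x-\alpha}$ with $\mathrm{Tr}_{\mathbb{F}_{q^4}/\mathbb{F}_q}(\alpha)=0$ and $\alpha^q+\alpha\neq 0$), whereas you count them through the $\mathbb{F}_{q^4}$-maximality of $\mathcal{S}_q$ and the Galois splitting dichotomy; both are valid, and your Riemann--Hurwitz computation of $d(Q_\infty|P_\infty)$ is exactly the argument the paper invokes through item ($iv$) of the remarks following Theorem \ref{teolevantado}.
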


\begin{proof}
Let $f(x)= x^{q_0} (x^q + x)$. 
We notice that if $\a$ in $\mathbb{F}_{q^4}$ is such that $\mathrm{Tr}_{\mathbb{F}_{q^4} / \fq}(\a)=0$ and $a^q+ a \neq 0$ then 
$\mathrm{Tr}_{\mathbb{F}_{q^4} / \fq}(f(\a))=0$ and we conclude, from the 90's Hilbert Theorem, that there exists $q$ solutions $y \in \mathbb{F}_{q^4}$ to $y^q+y=f(\a)$. That is, the rational place $P_{x-a}$ in $\mathbb{F}_{q^4}(x)$ splits in the extension $\mathcal{S}_q / \mathbb{F}_{q^4}(x)$. 

Let $n=q^3-q$ and let $P_1, \dots, P_n$ be the $\mathbb{F}_{q^4}$-rational places splitting in the extension $\mathcal{S}_q / \mathbb{F}_{q^4}(x)$. The code $C_\cL(D,G)$ with 
	$$D=P_1+\cdots+P_n \qquad \text{ and } \qquad G=\tfrac 12 (n-2) P_\infty$$ 
is a rational iso-dual code by Proposition \ref{prop:isodual}, since $n$ is even. Thus, we are in the conditions of Theorem~\ref{teolevantado} and then $\mathfrak{L}_{\mathcal{S}_q/\mathbb{F}_{q^4}(x)}(C_\cL(D,G))$ is an iso-dual code over $\mathcal{S}_q$ with parameters as stated in \eqref{eq: parameters Sz}, by Corollary \ref{parameters}.
\end{proof}

\subsection{Iso-dual codes on an maximal curve covered by the $GGS$-curve}
From a result by J.\@ P.\@ Serre it is known that any $\fqs$-rational curve which is $\fqs$-covered by an $\fqs$-maximal curve is also $\fqs$-maximal.
For some decades, ideas exploring this result were used to obtain examples of maximal curves, most of them were covered by the famous $\mathbb{F}_{q^2}$-maximal Hermitian curve. In 2009, Giulietti and Korchmáros provided in \cite{GK2009} the first example of a $\mathbb{F}_{q^6}$-maximal curve, nowadays referred to as the $GK$-curve, which is not covered by the Hermitian curve over $\fqc$, for any $q > 2$. In the same year, Garcia, G\"{u}neri and Stichtenoth presented a generalization of the $GK$-curve \cite{GGS2010}, now known as the $GGS$-curve, that is, a maximal curve over $\mathbb{F}_{q^{2r}}$ for $r\geq 3$ odd and isomorphic to the $GK$-curve for $r=3$. 

For $r \geq 3$ odd, consider the $\mathbb{F}_{q^{2r}}$ maximal curve covered by the $GGS$-curve (see \cite{ABQ2009} and \cite{GGS2010}) defined by the affine equation
\begin{equation} \label{eq: X}
		\mathcal{\cX}: \quad y^{q^2}-y=x^{\tfrac{q^r+1}{q+1}}.
	\end{equation}
with genus 
	$$g(\cX) = \tfrac 12 {(q-1)(q^n-q)}.$$ 	
	This curve defines an elementary abelian $p$-extension, so it ramifies only  at the  place $Q_\infty$ over  $P_\infty$ in $\mathbb{F}_{q^{2r}}(x)$ with even different exponent 
	$$d(Q_\infty|P_\infty) = (q^2-1) \tfrac{q^r+q+2}{q+1} = (q-1)(q^r+q+2)$$ 
by \eqref{expdif}. So we can apply Theorem \ref{teolevantado} and lift an iso-dual code over $ \mathbb{F}_{q^{2r}}(x)$.

\begin{proposition} \label{prop: GGS}
For $q$  and $r \geq 3$ both odd there exist a lifted iso-dual AG-code on the function field of the  curve  $\cX$ over $\mathbb{F}_{q^{2r}}$ 
with parameters satisfying
	\begin{equation} \label{eq: parameters GGS}
		\big[ q^2(q^r+1)(q^{r-1}-1), \;  \tfrac 12 q^2(q^r+1)(q^{r-1}-1), \; \ge \tfrac 12 ( q^{2 r + 1} -q^{r + 2} + q^r - q - 2) \big].
	\end{equation}
\end{proposition}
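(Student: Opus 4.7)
The plan is to adapt the strategy of Proposition~\ref{prop: Suzuki}: construct a rational iso-dual code over $\mathbb{F}_{q^{2r}}(x)$ using rational places that split completely in $\cX / \mathbb{F}_{q^{2r}}(x)$, and then apply Theorem~\ref{teolevantado} to obtain an iso-dual AG-code on $\cX$. By the preceding discussion, the extension is an elementary abelian $p$-extension of degree $m=q^2$, ramified only at $P_\infty$ and with even different exponent $d(Q_\infty|P_\infty)=(q-1)(q^r+q+2)$, so the hypotheses of Theorem~\ref{teolevantado} will hold as soon as one exhibits an even number of completely splitting places.

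First I would identify the completely splitting rational places. By Kummer's theorem applied to the Artin-Schreier extension $y^{q^2}-y = x^{\ell}$ with $\ell=(q^r+1)/(q+1)$, a place $P_\alpha$ with $\alpha\in\mathbb{F}_{q^{2r}}$ splits completely in $\cX$ if and only if $y^{q^2}-y=\alpha^\ell$ admits a root in $\mathbb{F}_{q^{2r}}$, equivalently $\tr_{\mathbb{F}_{q^{2r}}/\mathbb{F}_{q^2}}(\alpha^\ell)=0$. Let $S\subseteq \mathbb{F}_{q^{2r}}$ denote this set. Since the only ramified place $P_\infty$ contributes a single rational place in $\cX$, while each $\alpha\in S$ contributes exactly $q^2$ rational places, the $\mathbb{F}_{q^{2r}}$-maximality of $\cX$ gives
\[
1 + q^2|S| = |\cX(\mathbb{F}_{q^{2r}})| = q^{2r}+1+2g(\cX)q^r,
\]
and with $g(\cX)=\tfrac12(q-1)(q^r-q)$ this yields $|S|=q^{2r-1}-q^r+q^{r-1}$.

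This value is odd (because $q$ is odd), so I would discard one splitting place (for instance $P_0$, as $\alpha=0$ lies in $S$) to reach the even integer
\[
n = |S|-1 = (q^r+1)(q^{r-1}-1),
\]
and choose $P_1,\dots,P_n$ among the remaining places. With $D=P_1+\cdots+P_n$ and $G=\tfrac12(n-2)P_\infty$ the code $C_{\cL}(D,G)$ is iso-dual over $\mathbb{F}_{q^{2r}}(x)$ by Proposition~\ref{prop:isodual}$(c)$, so Theorem~\ref{teolevantado} applies and $\mathfrak{L}_{\cX/\mathbb{F}_{q^{2r}}(x)}(C_{\cL}(D,G))$ is iso-dual over $\cX$. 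Since $mn=q^2(q^r+1)(q^{r-1}-1)$ comfortably exceeds $2g(\cX)-2$, Corollary~\ref{parameters} then delivers $\tilde n=mn$, $\tilde k=mn/2$ and $\tilde d\ge\tfrac12(mn-2g(\cX)+2)$; substituting the expressions above and simplifying produces the stated parameters.

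The main obstacle is the place-counting step: one must translate the Kummer splitting criterion into the trace condition via Artin-Schreier theory and then invoke the maximality of $\cX$ to pin down $|S|$ exactly, together with the small parity subtlety that forces us to sacrifice one splitting place in order to meet the evenness hypothesis in Theorem~\ref{teolevantado}.
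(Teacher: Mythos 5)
Your proposal is correct and follows essentially the same route as the paper: build the rational iso-dual code $C_\cL(D,\tfrac12(n-2)P_\infty)$ at completely splitting places, check the evenness of $d(Q_\infty|P_\infty)$, lift via Theorem~\ref{teolevantado}, and read off the parameters from Corollary~\ref{parameters}. The only difference is that you derive the number of splitting places yourself from the trace criterion together with the $\mathbb{F}_{q^{2r}}$-maximality of $\cX$ (correctly noting that the total count $q^{2r-1}-q^r+q^{r-1}$ is odd, so one place, e.g.\ $P_0$, must be dropped to reach the even $n=(q^r+1)(q^{r-1}-1)$), whereas the paper simply cites \cite[Lemma 2]{ABQ2009} for this count; your substitution in fact yields the slightly sharper constant $-q+2$ in the distance bound, which implies the bound stated in \eqref{eq: parameters GGS}.
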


\begin{proof}
Let $n=(q^r+1)(q^{r-1}-1)$ and let $P_1, \dots, P_n$ be the $\mathbb{F}_{q^{2r}}$-rational places splitting in the extension $\mathbb{F}_{q^{2r}}(\cX) / \mathbb{F}_{q^{2r}}(x)$, see   \cite[Lemma 2]{ABQ2009}. The code $C_\cL(D,G)$ with 
	$$D=P_1+\cdots+P_n \qquad \text{ and } \qquad G=\tfrac 12 (n-2) P_\infty$$ 
is a rational iso-dual code by Proposition \ref{prop:isodual} since $n$ is even. Thus, we are in the conditions of Theorem~\ref{teolevantado} and then $\mathfrak{L}_{\cX / \mathbb{F}_{q^{2r}}(x)}(C_\cL(D,G))$ is an iso-dual code over $\cX $ with parameters as stated in \eqref{eq: parameters Sz}, by Corollary \ref{parameters}.
\end{proof}

\section{Lifting iso-dual codes over the Hermitian function field} \label{sec: from hermitian}
In order to lift an iso-dual code in an extension of function fields, some technical requirements are given in Theorem \ref{teolevantado}. We start this section proposing an algebraic curve over $\fqs$ satisfying the requirements. 
After that, we are going to lift a certain iso-dual code.

Consider the algebraic variety $\cX$ over $\fqs$ defined by the equations 
\begin{equation} \label{curveX}
\cX: \quad 
\begin{cases}
z^{q+1}=y^q+y,\\
y^{q+1}=x^q+x.
\end{cases}
\end{equation}
We now prove that $\cX$ is an absolutely irreducible curve over $\fqs$ and compute the genus and the number of $\ff_{q^2}$-rational points of $\cX$.

\begin{proposition} \label{propo_nuevaX}
	$\cX$ is an absolutely irreducible curve over $\fqs$, has genus $g(\cX)=q^3-q$ and its number of rational points over $\ff_{q^2}$ is $\#\cX(\ff_{q^2})=q^4+1$.
\end{proposition}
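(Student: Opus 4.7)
The plan is to view $\cX$ as a tower $\fqs(x) \subset \cH \subset \cX$, where the second equation $y^{q+1} = x^q + x$ defines the Hermitian function field $\cH = \fqs(x,y)$ (with $g(\cH) = \tfrac12 q(q-1)$ and $q^3+1$ rational places), and $\cX = \cH(z)$ is obtained by adjoining $z$ satisfying $z^{q+1} = y^q + y$. Since $q+1$ divides $q^2-1$ and is coprime to $p$, this is a Kummer extension in the sense of Proposition 3.7.3 of \cite{Stichbook09}. To prove absolute irreducibility it suffices to exhibit a place $P$ of $\cH$ at which $\gcd(v_P(y^q+y),\, q+1) = 1$, because this forces $[\cX:\cH] = q+1$, keeps $\fqs$ as the exact constant field of $\cX$, and makes $P$ totally ramified. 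The place $Q_\infty$ of $\cH$ above $P_\infty$ works: $v_{Q_\infty}(y) = -q$ gives $v_{Q_\infty}(y^q + y) = -q^2$, and $\gcd(q^2,\, q+1) = 1$.

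For the genus I apply the Riemann--Hurwitz formula to $\cX/\cH$ after locating all its ramified places via Kummer theory, i.e.\@ those $R$ with $(q+1) \nmid v_R(y^q+y)$. Writing $y^q + y = y(y^{q-1}+1)$, the zeros of $y^q+y$ in $\cH$ consist of the $q$ zeros $Q_1,\ldots,Q_q$ of $y$ (each with $v_{Q_i}=1$), together with the zeros of $y^{q-1}+1$. The pole divisor $(y^{q-1}+1)_\infty = q(q-1)Q_\infty$ has degree $q^2-q$, and I claim these $q^2-q$ zeros are all simple and rational: for each of the $q-1$ roots $\beta \in \fqs^*$ of $t^{q-1}+1$, the equation $x^q + x = \beta^{q+1}$ (with $\beta^{q+1}\in\fq$) has exactly $q$ solutions $x \in \fqs$ by surjectivity of the trace $\fqs\to\fq$, yielding $q$ distinct rational places of $\cH$ at which $y-\beta$ is a uniformizer. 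Adding the pole $Q_\infty$, there are exactly $1 + q + (q^2 - q) = q^2 + 1$ ramified places in $\cX/\cH$, each with ramification index $q+1$ and different exponent $q$. Hence $\deg\diff(\cX/\cH) = q(q^2+1) = q^3 + q$, and Riemann--Hurwitz gives
\[2g(\cX) - 2 = (q+1)(q^2 - q - 2) + q^3 + q = 2(q^3 - q) - 2,\]
so $g(\cX) = q^3 - q$.

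For the rational point count I enumerate affine $\fqs$-points directly and add the unique place above $Q_\infty$. Fixing $y \in \fqs$, one has $y^{q+1} \in \fq$, and by surjectivity of $\tr_{\fqs/\fq}$ there are exactly $q$ values of $x\in\fqs$ satisfying $x^q + x = y^{q+1}$. The equation $z^{q+1} = y^q + y = \tr_{\fqs/\fq}(y)$ then has one solution ($z=0$) when $\tr_{\fqs/\fq}(y)=0$, which happens for $q$ values of $y$, and exactly $q+1$ solutions when $\tr_{\fqs/\fq}(y) \in \fq^*$, because the subgroup of $(q+1)$-th powers in $\fqs^*$ has order $(q^2-1)/(q+1)=q-1$ and must therefore coincide with $\fq^*$. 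Summing yields
\[q\cdot q\cdot 1 + (q^2-q)\cdot q\cdot(q+1) = q^2 + (q^4 - q^2) = q^4\]
affine rational points, to which one adds the single rational place of $\cX$ above $Q_\infty$ (total ramification), producing $\#\cX(\fqs) = q^4 + 1$.

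The main obstacle is the ramification analysis of $\cX/\cH$: one must argue carefully that the $q^2 - q$ zeros of $y^{q-1}+1$ in $\cH$ are all simple, all rational over $\fqs$, disjoint from $Q_1,\ldots,Q_q$ and $Q_\infty$, and that no further ramified places exist. Once this is in place, the Riemann--Hurwitz computation and the affine point count are immediate algebraic manipulations in $\fqs$.
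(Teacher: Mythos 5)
Your proposal is correct and follows essentially the same route as the paper: view $\cX$ as a degree-$(q+1)$ Kummer extension of the Hermitian function field $\cH$, identify the $q^2+1$ totally (and tamely) ramified places at the zeros and the pole of $y^q+y$, apply Riemann--Hurwitz to get $g(\cX)=q^3-q$, and count rational points fiberwise, adding the unique place over $Q_\infty$. The only difference is bookkeeping: you organize the zeros of $y^q+y$ through the factorization $y(y^{q-1}+1)$ and the values of $y$, whereas the paper partitions the elements $\a\in\fqs$ below them into the sets $S_0,S_1,S_2$; the resulting ramification data and counts agree.
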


\begin{proof}
For simplicity, we put $\cM=\fqs(x,y,z)$ and $\cF=\fqs(x,y)$. 
The first equation 
$y^{q+1}=x^q+x$ 
defines the maximal Hermitian function field $\cF$ of genus $\tfrac 12q(q-1)$. We denote by $P_\alpha$ the place in $\fqs(x)$ associated to the zero of $x-\a$ and by $P_\infty$ the pole of $x$. In the extension $\cF/\ff_{q^2}(x)$ we have the following well-known structure for the rational places: each $P_\a$ for $\a^q+\a=0$ is totally ramified and we denote by $Q_{\a, 0}$ be the only place over it. For $\a \in \fqs $ that is not a root of $x^q+x =0$ the place $P_\a $ splits and we denote by $Q_{\a, \b_i}$ the $q+1$ places over it, where $\b_i^{q+1}=\a^q+\a$.

We notice that $\cX$ is an absolutely irreducible curve since $Q_\infty$ is totally ramified in $\cM/\fqs(x)$. Now we prove that $\cM/\cF$ defines a Kummer extension. We start by computing the divisor of $y^q+y$ in $\cF$. Clearly $Q_\infty$ and $Q_{\alpha,0}$ are totally ramified in $\cM/\cF$. Consider the sets
\begin{align*}
 & S_0=\{\alpha\in \mathbb{F}_{q^2}: \alpha^q+\alpha= 0\},\\[1mm]
 & S_1=\{\alpha\in \mathbb{F}_{q^2}: \alpha^q+\alpha\neq 0 \text { and } x^2+\alpha^q+\alpha \, \in \mathbb{F}_{q}[x] \text { is irreducible}\}, 
 \\[1mm] 
 & S_2=\{\alpha\in \mathbb{F}_{q^2}: \alpha^q+\alpha\neq 0 \text { and } x^2+\alpha^q+\alpha \, \in \mathbb{F}_{q}[x] \text { is reducible}\}.
\end{align*}
 Let $Q_{\a, \b}$ be a zero of $y^q+y$ in $\cF$, with $\alpha\neq 0$. If $\beta =0$ we have that $\alpha \in  S_0$. From now on we consider $\beta \neq 0$. Then, it is a simple zero and we have that $\b^q=-\b$ and $\b^{q+1}=\a^q+\a $ and hence $\b^2-\a^q-\a=0$ and 
$\b \in \fqs \smallsetminus \fq $. We conclude that $\a \in S_1$, i.e.  $\a \in \fqs$ such that $x^2+\a^q+\a$ is an irreducible polynomial over $\fq$ with two distinct roots $\b_\a$ and  $\b_\a^q$ in $\fqs \smallsetminus \fq$. We denote by $Q_{\a, \b_\a}$ and $Q_{\a, \b_{\a}^q}$ these places in $\cF$.  Moreover, since $Q_\infty$ is the only pole of $y$ of order $q$ we have that 
\begin{equation}\label{diff}
( y^q+y)^\cF =\sum_{\a \in S_0} Q_{\a,0} + 
			  \sum_{\a \in S_1 } (Q_{\a, \b_\a} + Q_{\a, \b_{\a}^q}) - q^2Q_\infty .
\end{equation}

Hence, we conclude that the extension $\cM/\cF$ defines a Kummer extension. By Kummer theory, the places $Q_{\a, \b_\a}$ and $Q_{\a, \b_{\a}^q}$ are totally ramified in the extension $\cM/\cF$. Since $|S_1|=\tfrac 12(q^2-q)$, we have a total of $q^2+1$ totally ramified places in the extension $\cM/\cF$.  In Figure~\ref{figu0} we summarize the ramification in the curve $\cX$.

 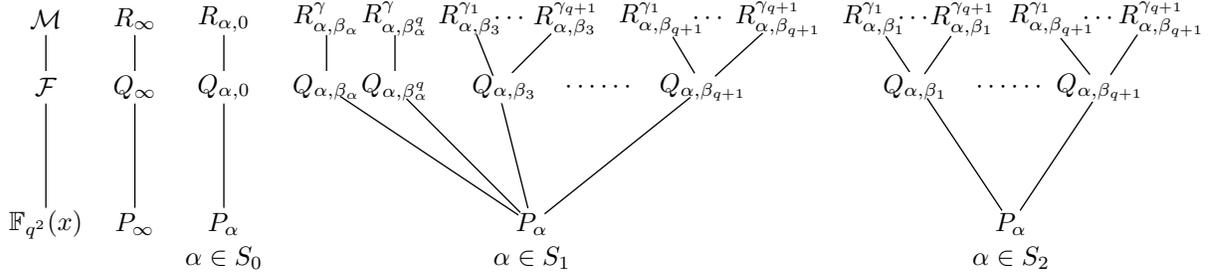
\begin{figure}[h!t]
        \begin{center}
  \begin{tikzpicture}[scale=.91]
 \draw[line width=0.5 pt](-3.1,0)--(-3.1,3);
   \draw[white,  fill=white](-3.1,0) circle (0.23 cm);
   \draw[white,  fill=white](-3.1,2) circle (0.23 cm);
\draw[white,  fill=white](-3.1, 3) circle (0.23 cm);
\node at(-3.1,0){\footnotesize{$\mathbb{F}_{q^2}(x)$}};
\node at(-3.1,2){\footnotesize{$\mathcal{F}$}};
\node at(-3.1,3){\footnotesize{$\mathcal{M}$}};
 \draw[line width=0.5 pt](-1.8,0)--(-1.8,3);
   \draw[white,  fill=white](-1.8,0) circle (0.23 cm);
   \draw[white,  fill=white](-1.8,2) circle (0.23 cm);
\draw[white,  fill=white](-1.8, 3) circle (0.23 cm);
\node at(-1.8,0){\footnotesize{$P_\infty$}};
\node at(-1.8,2){\footnotesize{$Q_\infty$}};
\node at(-1.8,3){\footnotesize{$R_\infty$}};
 \draw[line width=0.5 pt](-0.5,0)--(-0.5,3);
   \draw[white,  fill=white](-0.5,0) circle (0.23 cm);
   \draw[white,  fill=white](-0.5,2) circle (0.23 cm);
\draw[white,  fill=white](-0.5, 3) circle (0.23 cm);
\node at(-0.5,0){\footnotesize{$P_\alpha$}};
   \node at(-0.5,-0.5){\footnotesize{$\alpha \in S_0$}};
\node at(-0.5,2){\footnotesize{$Q_{\alpha,0}$}};
\node at(-0.5,3){\footnotesize{$R_{\alpha,0}$}};
     \draw[line width=0.5 pt](4,0)--(1,2);
        \draw[line width=0.5 pt](1,2)--(1,3);
     \draw[line width=0.5 pt](2,2)--(2,3);
                     \draw[line width=0.5 pt](4,0)--(3.5,2);
                     \draw[line width=0.5 pt](3.5,2)--(4.5,3);
                     \draw[line width=0.5 pt](3.5,2)--(3.1,3);
                       \draw[line width=0.5 pt](4,0)--(2,2);
                       \draw[line width=0.5 pt](4,0)--(6.5,2);
                     \draw[line width=0.5 pt](6.5,2)--(7.6,3);
                       \draw[line width=0.5 pt](6.5,2)--(5.9,3);

        \draw[white,  fill=white](4.5,3) circle (0.3 cm);
  \draw[white,  fill=white](1,2) circle (0.23 cm);
    \draw[white,  fill=white](1,3) circle (0.23 cm);  
  \draw[white,  fill=white](2,3) circle (0.23 cm);    
    \draw[white,  fill=white](2,2) circle (0.23 cm);     
   \draw[white,  fill=white](3.1,3) circle (0.23 cm); 
     \draw[white,  fill=white](3.6,2) circle (0.23 cm);
     \draw[white,  fill=white](7.6,3) circle (0.3 cm);   
   \draw[white,  fill=white](5.9,3) circle (0.3 cm); 
     \draw[white,  fill=white](6.5,2) circle (0.23 cm);
 \draw[white,  fill=white](4,0) circle (0.23 cm);
   \node at(4,0){\footnotesize{$P_\alpha$}};
 \node at(1,2){\footnotesize{$Q_{\alpha, \beta_\alpha}$}};
   \node at(5,2){\footnotesize{$\cdots\cdots$}};  
 \node at(3.6,2){\footnotesize{$Q_{\alpha,\beta_3}$}};         
  \node at(1,3){\footnotesize{$R_{\alpha, \beta_\alpha}^{\gamma}$}};      
 \node at(2,3){\footnotesize{$R_{\alpha, \beta_\alpha^q}^{\gamma}$}};  
  \node at(2,2){\footnotesize{$Q_{\alpha, \beta_\alpha^q}$}};  
  \node at(3.7,3){\footnotesize{$\cdots$}};  
    \node at(6.6,3){\footnotesize{$\cdots$}};  
 \node at(3.1,3){\footnotesize{$R_{\alpha, \beta_3}^{\gamma_1}$}};      
    \node at(4.5,3){\footnotesize{$R_{\alpha, \beta_3}^{\gamma_{q+1}}$}};  
     \node at(6.5,2){\footnotesize{$Q_{\alpha,\beta_{q+1}}$}};    
     \node at(5.9,3){\footnotesize{$R_{\alpha,\beta_{q+1}}^{\gamma_1}$}};      
    \node at(7.6,3){\footnotesize{$R_{\alpha,\beta_{q+1}}^{\gamma_{q+1}}$}};  
            \node at(4,-0.5){\footnotesize{$\alpha \in S_1$}};

           \draw[line width=0.5 pt](11,0)--(9,3);
     \draw[line width=0.5 pt](9.6,2)--(10.3,3);
                     \draw[line width=0.5 pt](11,0)--(13,3);
                     \draw[line width=0.5 pt](12.3,2)--(11.5,3);
        \draw[white,  fill=white](12,0) circle (0.23 cm);
  \draw[white,  fill=white](9.6,2) circle (0.23 cm);
    \draw[white,  fill=white](9,3) circle (0.3 cm);  
  \draw[white,  fill=white](10.3,3) circle (0.3 cm);    
 \draw[white,  fill=white](12.9,3) circle (0.23 cm);   
   \draw[white,  fill=white](11.6,3) circle (0.3 cm); 
     \draw[white,  fill=white](12.3,2) circle (0.23 cm);
 \draw[white,  fill=white](11,0) circle (0.23 cm);
    \node at(11,0){\footnotesize{$P_\alpha$}};
        \node at(11,-0.5){\footnotesize{$\alpha \in S_2$}};
 \node at(9.6,2){\footnotesize{$Q_{\alpha,\beta_1}$}};
   \node at(11,2){\footnotesize{$\cdots\cdots$}};  
 \node at(12.3,2){\footnotesize{$Q_{\alpha,\beta_{q+1}}$}};         
  \node at(9,3){\footnotesize{$R_{\alpha,\beta_1}^{\gamma_1}$}};      
 \node at(10.3,3){\footnotesize{$R_{\alpha,\beta_1}^{\gamma_{q+1}}$}};  
  \node at(9.6,3){\footnotesize{$\cdots$}};  
    \node at(11.6,3){\footnotesize{$R_{\alpha,\beta_{q+1}}^{\gamma_1}$}};      
 \node at(13.2,3){\footnotesize{$R_{\alpha,\beta_{q+1}}^{\gamma_{q+1}}$}};  
  \node at(12.3,3){\footnotesize{$\cdots$}};  
 \end{tikzpicture}
  \caption{Decomposition of places of $\mathbb{F}_{q^2}(x)$ in $\mathcal{M}$.} \label{figu0}
\end{center}\end{figure}

From \eqref{diff} we obtain that the degree of the different $\text{Diff}(\cM/\cF)$ is $q^3+q$, since there are $q^2+1$ totally ramified places in the extension $\cM/\cF$. 
Now, since $[\cM:\cF]=q+1$, the Riemann-Hurwitz genus formula yields
\begin{align*}
2g(\cM)-2 = 2 g(\cF) -2 + \deg(\text{Diff}(\cM/\cF)) = (q+1)(q^2-q-2)+ q^3+q.
\end{align*}
That is $2g(\cM)-2 = 2(q^3-q-1)$, from which we get $g(\cX)=g(\cM)=q^3-q$.

Now we compute the number of $\fqs$-rational points on the curve $\cX$. The place at infinity $R_\infty$ is rational. 
Clearly, there are $q$ places of the form $R_{\alpha,0}$ with $\alpha \in S_0$. 
Let $R_{\a, \b}^ {\g}$ be a rational place in $\cM$ over $Q_{\a, \b}$ in $\cF$ for $\beta \neq 0$, then 
	$$\g^{q+1}=\b^q+\b \qquad \text{and} \qquad \b^{q+1}=\a^q+\a \neq 0.$$ 
We consider two cases.

\noindent \textit{Case 1}: 
If $\b^q+\b=0$, then $\b^{q+1}=\a^q+\a$ implies $-\b^2=\a^q+\a, \b \in \fqs \smallsetminus \fq $. So $x^2+\a^q+\a$ is irreducible over $\fq$ and we obtain (as before) that the two places $Q_{\a, \b_\a}$ and $Q_{\a, \b_{\a}^q}$ are both totally ramified in $\cM/\cF$, all the other $Q_{\a, \b_i}$ split in $\cM/\cF$. 

\noindent \textit{Case 2}: If $\b^q+\b \neq 0$ then the equation $z^{q+1}-\b^q-\b$ factors into $q+1$  factors of degree one over $\fqs$. Hence the place   $Q_{\a, \b}$ splits in the extension $\cM/\cF$. 

Considering the two cases we have a total of 
	$$ 2 \tfrac{(q^2-q)}{2}+(q+1) \big( (q-1) \tfrac{q^2-q}{2} + (q+1)(\tfrac{q^2-q}2)  \big) = (q^2-q)+(q+1)(q^2-q)q$$ 
rational places over $\fqs$ (recall that $\frac{q^2-q}{2}$ is the number of $\a$ in $\fqs$ such that $\a^q+\a \neq0$ and $\a^q+\a$ is a square in $\fq$, or not a square in $\fq$).

Summarizing, the curve $\cX$ has in total  
$$1 + q + (q^2-q) + (q^2-q)(q+1)q = q^4+1$$
rational points over $\fqs$.
\end{proof}

We now present an iso-dual code defined in $\mathcal{M}=\mathbb{F}_{q^2}(\mathcal{X})$ for certain values of $q$. We will also consider the intermediate function field $\mathcal{F}$ as in the proof of Proposition \ref{propo_nuevaX}. 

\begin{theorem} \label{thm: isodual Herm}
For any $q=2^s$ with $s>1$ 
there is a lifted iso-dual code over $\mathcal{M}=\mathbb{F}_{q^2}(\mathcal{X})$, where $\cX$ is as in \eqref{curveX}, with parameters 
$$ n = \tfrac 12(q^2-q)(q+1)^2, \quad k = \tfrac 14(q^2-q)(q+1)^2, \quad \text{and} \quad d \ge \tfrac 14(q^4-q^3-q^2+3q+4).$$ 
\end{theorem}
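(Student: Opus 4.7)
The plan is to apply Theorem~\ref{teolevantado} to the Kummer extension $\cM/\cF$ of degree $q+1$, where $\cF=\fqs(x,y)$ is the Hermitian function field at the bottom of the tower underlying $\cX$. First I verify the three hypotheses using Proposition~\ref{propo_nuevaX}. In characteristic~$2$, the rational places of $\cF$ that split completely in $\cM/\cF$ are precisely the places $Q_{\a,\b}$ with $\a^q+\a\neq 0$ and $\b\notin\fq$; there are $(q^2-q)\cdot q=q^3-q^2$ of them. The extension is unramified outside the $q^2+1$ totally ramified places $Q_\infty$, $\{Q_{\a,0}\}_{\a\in S_0}$, and $\{Q_{\a,\b_\a}\}_{\a\in \fqs\smallsetminus S_0}$, and since the Kummer extension is tame of degree $q+1$, every different exponent equals $q=2^s$, which is even. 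Hence conditions $(a)$--$(c)$ of Theorem~\ref{teolevantado} hold for any choice of splitting support.

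The core task is then to construct an iso-dual AG-code $C_\cL(D,G)$ over $\cF$ of length $n_0=(q^3-q)/2$ with $\supp(D)$ contained in the $q^3-q^2$ splitting places; this is possible because $(q^3-q)/2\leq q^3-q^2$ holds for every $q\geq 2$. By Proposition~\ref{prop:isodual}(b) the iso-dual condition forces $\deg G=(n_0+2g_\cF-2)/2$, and it reduces to exhibiting a Weil differential $\omega$ on $\cF$ with $v_P(\omega)=-1$ and $\mathrm{res}_P(\omega)=1$ at each $P\in\supp(D)$, together with $(\omega)\sim 2G-D$. Following the template of Theorem~\ref{thm:herm}, I start from the function $t_0=(y^{q^2}-y)/(y^q+y)\in\cF$, whose zero divisor in $\cF$ is the sum of \emph{all} $q^3-q^2$ splitting places (this is the $y$-variable analogue of the function $t=(x^{q^2}-x)/(x^q+x)$ used in Theorem~\ref{thm:herm} for $\cM/\fqs(x)$). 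I then modify $t_0$ by a rational factor supported at the ramified locus of $\cM/\cF$ to shrink its zero set to a chosen subset of size $(q^3-q)/2$; Proposition~\ref{prop:isodual}(a) then certifies the iso-duality.

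Finally, Theorem~\ref{teolevantado} gives the lifted iso-dual code $C_\cL(\tilde D,\tilde G)$ over $\cM$ with $\tilde D=\con_{\cM/\cF}(D)$ and $\tilde G=\con_{\cM/\cF}(G)+\tfrac12\diff(\cM/\cF)$, and Corollary~\ref{parameters} applied with $m=q+1$, $n=(q^3-q)/2$ and $g_\cM=q^3-q$ yields the length $\tilde n=(q+1)(q^3-q)/2=(q^2-q)(q+1)^2/2$ and the dimension $\tilde k=\tilde n/2$. The minimum distance bound follows from $\tilde d\geq \tilde n-\deg\tilde G$ combined with $\deg\tilde G=(q+1)\deg G+\tfrac12\deg\diff(\cM/\cF)$ and $\deg\diff(\cM/\cF)=q^3+q$, after arithmetic simplification. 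The hard part is the construction in the second paragraph: pinpointing a specific rational function on $\cF$ whose zero divisor is a carefully chosen half-sized subset of the $q^3-q^2$ splitting places of $\cM/\cF$, and tracking the modifications needed to preserve the iso-dual relation of Proposition~\ref{prop:isodual}(a) throughout.
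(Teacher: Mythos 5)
Your reduction to Theorem \ref{teolevantado} and the bookkeeping around it are sound: in characteristic $2$ the places $Q_{\alpha,\beta}$ of $\cF$ with $\beta\notin\fq$ do split completely in $\cM/\cF$ and there are $q^3-q^2$ of them, the extension is tame of degree $q+1$ with all different exponents equal to $q$ (even), $g_{_\cM}=q^3-q$, and Corollary \ref{parameters} is applicable. The genuine gap is your second paragraph: the intermediate iso-dual code over the Hermitian field $\cF$ of length $\tfrac12(q^3-q)$ supported on splitting places is never actually constructed, and the sketch cannot work as stated. The function $t_0=(y^{q^2}-y)/(y^q+y)$ has simple zeros at \emph{all} $q^3-q^2$ splitting places and its only pole at $Q_\infty$; multiplying it by a factor whose divisor is supported on the ramified locus of $\cM/\cF$ changes nothing at the splitting places, so it cannot remove the $\tfrac12 q(q-1)^2$ unwanted zeros. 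What you would really need is an explicit $h\in\cF$ whose zero divisor is precisely the complementary set of splitting places (simple zeros, poles only at admissible places), together with a divisor $G$ disjoint from $D$ and supported on the ramified locus with $2G-D$ canonical and the residues under control. Producing such data on a curve of genus $\tfrac12 q(q-1)$ is exactly the difficulty the lifting method is meant to bypass; you label it ``the hard part'' and leave it unresolved, so item ($a$) of Proposition \ref{prop:isodual} is invoked on an object that has not been exhibited.

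The paper avoids this by starting one level lower: it takes the rational iso-dual code $C_\cL\bigl(\sum_{\alpha\in S_2}P_\alpha,\ \tfrac14(q^2-q-4)P_\infty\bigr)$ over $\fqs(x)$, which is iso-dual by the degree criterion alone (item ($c$) of Proposition \ref{prop:isodual}; integrality of $\tfrac14(q^2-q-4)$ is where $s>1$ enters), and applies Theorem \ref{teolevantado} twice, first along the Hermitian extension $\cF/\fqs(x)$ and then along $\cM/\cF$, both tame of degree $q+1$ with even different exponents $q$. The intermediate iso-dual code over $\cF$ that you are missing is thereby obtained for free as a lift, its support being a union of fibres over rational places of $\fqs(x)$; if you wish to keep a one-step presentation, choosing your half-sized subset of splitting places fibrewise in this manner is the natural repair. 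One further point: the arithmetic you defer gives $\tilde d\ \ge\ \tilde n-\deg\tilde G=\tfrac14(q^4-3q^3-q^2+3q+4)$, both via Corollary \ref{parameters} and via $\deg\tilde G=(q+1)\deg G+\tfrac12(q^3+q)$, rather than the stated $\tfrac14(q^4-q^3-q^2+3q+4)$, so the assertion that the stated bound follows ``after arithmetic simplification'' needs to be checked rather than asserted.
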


\begin{proof}
Taking the divisors 
	$$D=\sum_{\alpha \in S_2} P_\alpha \qquad \text{and} \qquad G= \tfrac 14(q^2-q-4) P_\infty$$ 
in the rational function field $\mathbb{F}_{q^2}(x)$ we have, by \eqref{isorat} in Proposition \ref{prop:isodual}, that the code $C_\cL(D, G)$ is a rational iso-dual code with parameters 
	$$n = \tfrac 12(q^2-q) \text{ even}, \qquad k = \tfrac 14(q^2-q) \qquad \text{and} \qquad  d \ge \tfrac 12(q^2-q+1).$$
This code can be lifted to $\mathcal{F}$ using Theorem~\ref{teolevantado}  to an iso-dual code $\mathfrak{L}_{\cF/\mathbb{F}_{q^2}(x)}(C_\cL(D,G))=C_\cL(D^{\cF},G^{\cF})$ of length and dimension given by 
	$$ n^{\mathcal{F}} = \tfrac 12(q^2-q)(q+1)  \qquad \text{and} \qquad k^\cF = \tfrac 14(q^2-q)(q+1).$$ 

Since this code is iso-dual we have 
 $\deg(G^{\mathcal{F}})=\frac{1}{4}(q^3+2q^2-3q-4)$ and 
	$$d^\cF \ge n^{\mathcal{F}}-\deg(G^{\mathcal{F}}) = \tfrac 14 (q^3-2q^2+q+4).$$ 
Now,  from Remark \ref{canonicallifting}  we are again in the conditions of Theorem~\ref{teolevantado}, and then the lifted code   
$$\mathfrak{L}_{\cM/\cF}(\mathfrak{L}_{\cF/\mathbb{F}_{q^2}(x)}(C_\cL(D,G)))=\mathfrak{L}_{\cM/\cF}(C_\cL(D^{\cF},G^{\cF}))=C_\cL(D^{\cM},G^{\cM})$$
is an iso-dual code over $\mathcal{M}$, where 
	$$n^{\mathcal{M}} = \tfrac 12(q^2-q)(q+1)^2  \qquad \text{and} \qquad  
	  k^\mathcal{M} = \tfrac 14(q^2-q)(q+1)^2.$$ 
We also have $\deg(G^{\mathcal{M}})=
	\tfrac 14 (q^4+5q^3-q^2-5q-4)$ and, therefore,
	$$d^\mathcal{M} \ge n^\cM-\deg(G^\cM) = \tfrac 14(q^4-q^3-q^2+3q+4),$$ 
as we wanted to show. \end{proof}

\section{Binary and ternary cyclotomic iso-dual codes} \label{sec:cyclotomic}
Some subfields of a cyclotomic function field has been used by Quebbemann in \cite{Que88} to give examples of long AG-codes (called cyclotomic Goppa codes). Here, we will  construct long binary and ternary iso-dual  AG-codes by using an alternative approach introduced in \cite{NX1996} to produce optimal function fields over $\ff_2$. In view of the terminology used by Quebbemann in \cite{Que88}, we will call these codes binary and ternary cyclotomic iso-dual codes.

We follow the presentation of Hayes \cite{Ha74} of cyclotomic function fields (see also Section~3.2 of \cite{NX2001} where a summary of the main results of Hayes using divisors in additive notation is presented). Let $R=\fq[x]$ be the polynomial ring over $\fq$ and let $\bar{\cF}$ be an algebraic closure of the rational function field $\cF=\fq(x)$. Let $\varphi$ the $\fq$-vector space endomorphism of $\bar{\cF}$ defined as
	$$\varphi(u)=u^q+xu,$$ 
for all $u\in  \bar{\cF}$. Then $\varphi$ induces a ring homomorphism from $R$ to $\mathrm{End}_{\fq}(\bar{\cF})$ by $f\mapsto f(\varphi)$; that is, if $f(x)=\sum_i a_i x^i$ then $f(\varphi)=\sum_i a_i \varphi^i$. 
This, in turn, allows  to define an $R$-module structure on $\bar{\cF}$ by defining an action of $R$ on $\bar{\cF}$ as
	$$u^{f}=f(\varphi)(u),$$
for $f\in R$ and $u\in  \bar{\cF}$.

Let $f\in R$. By considering the submodule of $f$-torsion points of this action
\[\Lambda_f=\{u\in  \bar{\cF}:u^f=0\},\]
we have the field $\cF(\Lambda_f)$ generated over $\cF$ by the elements of $\Lambda_f$. This field is a finite abelian extension of $\cF$ and its Galois group 
is isomorphic to 
the group of units of the quotient ring $R/(f)$, where $(f)$ denotes the principal ideal of $R$ generated by $f$, that is 
$$ \mathrm{Gal}(\cF(\Lambda_f)/\cF) \simeq \left(R/(f)\right)^*.$$

Assume that $h\in R$ is monic and irreducible over $\fq$. As before, we denote by $P_h$ the place of $\cF$ associated to $h$. The unique automorphism $\sigma_{ \bar{h}}\in \mathrm{Gal}(\cF(\Lambda_f)/\cF)$ determined by its residual class $ \bar{h}\in \left(R/(f)\right)^*$ acts as $\sigma_{ \bar{h}}(u)=u^h$ for all $u\in \Lambda_f$. 
In particular, if $h$ does not divide $f$, the Artin symbol 
$$ \left[\tfrac{\cF(\Lambda_f)/\cF}{P_h}\right] $$ 
of the place $P_h$ is the automorphism $\sigma_{ \bar{h}}$. Therefore, if $H$ is the subgroup of $\left(R/(f)\right)^*$ generated by $ \bar{h}$, where $h\in R$ is irreducible and does not divide $f$, then $P_h$ splits completely in the fixed subfield 
\begin{equation} \label{eq: Kf}
	\cK=\cF(\Lambda_{f})^H
\end{equation}
of $\cF(\Lambda_f)$ (see, for instance, Proposition 1.4.12 of \cite{NX2001}). Thus, with a suitable choice of such a polynomial $h$ we can find many rational places in a subfield of $\cF(\Lambda_f)$ whose genus and degree can be explicitly computed. Some optimal function fields over $\ff_2$ were found in \cite{NX1996} in this way.

Suppose now that $f\in R$ is monic of degree $d$ and irreducible over $\ff_q$. We have that  			
$$\cF(\Lambda_{f^n})/\cF$$ 
is a finite abelian field extension of degree 
	$$q^{dn}-q^{d(n-1)}=q^{d(n-1)}(q^d-1),$$ 
where the places $P_f$ and $P_\infty$ (the only pole of $x$ in $\cF$) are the only places of $\cF$ that can be ramified in $\cF(\Lambda_{f^n})$. In fact, $P_f$ is always totally ramified in $\cF(\Lambda_{f^n})$ and the  place $P_\infty$ is totally ramified in $\cF(\Lambda_f)$ and then it splits completely in $\cF(\Lambda_{f^n})/\cF(\Lambda_f)$. In terms of ramification indices, we have that if we denote by $Q_\infty$  the only place of $\cF(\Lambda_f)$ lying over $P_\infty$ then $Q_\infty$ is a rational place, $e(Q_\infty|P_\infty)=q-1$ and there are exactly 			
$$r=q^{d(n-1)}(q^{d-1}+q^{d-2}+\cdots+q+1)$$ 
places of $\cF(\Lambda_{f^n})$ lying over $Q_\infty$ and they are all rational places.  For the place $P_f$, the only place of $\cF(\Lambda_{f^n})$ is denoted by $R_f$ and its restriction to $\cF(\Lambda_f)$ is denoted by $Q_f$. This situation is illustrated in Figure \ref{figcyclotomic1} below.
\begin{figure}[h!t] 
\begin{center}
	\begin{tikzpicture}
		\coordinate (a) at (0,0);
		\coordinate (b) at (0,1.3);
		\coordinate (c) at (0,2.5);
		\draw (a)--(c);
		\filldraw[white] (a) circle [radius=0.1];
		\filldraw[white] (b) circle [radius=0.3];
		\filldraw[white] (c) circle [radius=0.1];
		\node at (a) [below] {\scalebox{.75}{$\cF$}};
		\node at (b)  {\scalebox{.75}{$\cF(\Lambda_f)$}};
		\node at (c) [above] {\scalebox{.75}{$\cF(\Lambda_{f^n})$}};
		\node at (-.6,.5){\scalebox{.75}{$q-1$}};
		\node at (-.5,2){\scalebox{.75}{$r$}};
		\coordinate (d) at (1.5,0);
		\coordinate (e) at (1.5,1.3);
		\coordinate (f) at (1.5,2.5);
		\draw (d)--(f);
		\filldraw[white] (d) circle [radius=0.1];
		\filldraw[white] (e) circle [radius=0.3];
		\filldraw[white] (f) circle [radius=0.1];
		\node at (d) [below] {\scalebox{.75}{$P_f$}};
		\node at (e)  {\scalebox{.75}{$Q_f$}};
		\node at (f) [above] {\scalebox{.75}{$R_f$}};
		\node at (2.3,.5){\scalebox{.75}{$e=q-1$}};
		\node at (2,2){\scalebox{.75}{$e=r$}};
		\coordinate (g) at (3.5,0);
		\coordinate (h) at (3.5,1.3);
		\coordinate (i) at (3.5,2.5);
		\draw (g)--(h);
		\draw (h)--(3,2.5);
		\draw (h)--(4,2.5);
		\filldraw[white] (g) circle [radius=0.1];
		\filldraw[white] (h) circle [radius=0.3];
		\filldraw[white] (i) circle [radius=0.1];
		\node at (g) [below] {\scalebox{.75}{$P_\infty$}};
		\node at (h)  {\scalebox{.75}{$Q_\infty$}};
		\node at (i) [above] {\scalebox{.75}{$R_\infty^1 \cdots\cdots R_\infty^r$}};
		\node at (4.5,.5){\scalebox{.75}{$e=q-1$}};
		\node at (4.5,2){\scalebox{.75}{$e=1$}};	
	\end{tikzpicture}
	\caption{Decompositions of $P_f$ and $P_\infty$.} \label{figcyclotomic1}
\end{center}
\end{figure}
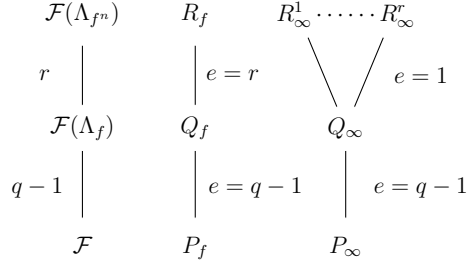

From now on, we consider the particular case $f=x$ in \eqref{eq: Kf} and then $\cF(\Lambda_{x^n})/\cF$ 
is a cyclotomic extension of $\cF$ of degree $q^{n-1}(q-1)$. We know that 
\begin{equation}\label{genusf=x}
	2g(\cF(\Lambda_{x^n}))-2=q^{n-1}(n(q-1)-q-1)
\end{equation}
(see Corollary 4.2 of \cite{Ha74}) and thus the function field $\cF(\Lambda_x)$ is, in fact, a rational function field over $\fq$ (see Proposition 1.6.3 of \cite{Stichbook09}). 

We will define a cyclic subgroup $H$ of $\left(R/(x^n)\right)^*$ of order $q^m$ generated by a suitable residual class  
$\overline{x-\alpha}$ for some $\alpha\in \ff_q$. With this choice of $H$ we have that the subfield 										
	$$\mathcal{K}_n = \cF(\Lambda_{x^n})^H$$ 
defines a cyclic extension $\cF(\Lambda_{x^n})/\mathcal{K}_n$ of degree $q^m$ and we will try to lift a rational iso-dual AG-code to an iso-dual AG-code over $\mathcal{K}_n$.

For an estimate of the minimum distance of these liftings we need to compute the genus of $\mathcal{K}_n$ according to Corollary \ref{parameters}. The case $q=2$ of item $(b)$ of the following proposition was proved in Theorem 2 of \cite{NX1996}. 

\begin{proposition} \label{Kn}
Let $n,q\in \N$ with $q$ a prime power and put $m=\lceil \log_q(n)\rceil$. Let $H$ be the subgroup of $\left(R/(x^n)\right)^*$  generated by the residual class $ \overline{x+1}$. Let $\mathcal{K}_n=\cF(\Lambda_{x^n})^H$ be the fixed subfield of $\cF(\Lambda_{x^n})$ by $H$ and let $S_x=R_x\cap\cK_n$ be the restriction to $\cK_n$ of the only place $R_x$ of $\cF(\Lambda_{x^n})$ lying over $P_x$. Then, we have the following. 
\begin{enumerate}[$(a)$]
\item $H$ is a cyclic group of order $q^m$. In particular, the extension $\cF(\Lambda_{x^n})/\cK_n$ is cyclic of degree $q^m$, the extension $\cK_n/\cF$ is of degree $q^{n-m-1}(q-1)$ and the place $S_x$ is a rational place of $\cK_n$ which is totally ramified in $\cF(\Lambda_{x^n})$ (see Figure \ref{figcyclotomic2} below). \sk 

\item If $\cF(\Lambda_x)\subset \cK_n$, then the extension $\mathcal{K}_n/\cF(\Lambda_x)$ is unramified outside the place $Q_x$ and the genus of $\cK_n$ is 
\begin{equation} \label{Kgenus}
	g(\mathcal{K}_n)=  \tfrac 12 q^{-m} \Big(q^{n-1}(n(q-1)-q-1)-\sum_{i=1}^{q^m-1}q^{e_i} \Big) +1,
\end{equation}
where $e_i$ denotes the least integer such that $p=\emph{Char}(\fq)$ does not divide the binomial number $\binom{i}{j}$ for $1\leq j\leq i<n$.
\end{enumerate}
\end{proposition}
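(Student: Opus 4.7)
The plan is to handle Part $(a)$ via a direct order computation followed by the Galois correspondence, and Part $(b)$ by applying Riemann–Hurwitz to the totally ramified $p$-extension $\cF(\Lambda_{x^n})/\mathcal{K}_n$, reducing the genus formula to the computation of the different exponent at $R_x|S_x$.

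For Part $(a)$, the first step is to determine the order of $\overline{x+1}$ in $(R/(x^n))^*$. Since the characteristic is $p$ and $q=p^s$, the Frobenius identity $(x+1)^{q^m}=x^{q^m}+1$ together with the inequalities $q^{m-1}<n\le q^m$ from the definition of $m$ yields $(x+1)^{q^m}\equiv 1\pmod{x^n}$ and $(x+1)^{q^{m-1}}\equiv 1+x^{q^{m-1}}\not\equiv 1\pmod{x^n}$. Because $\overline{x+1}$ lies in the pro-$p$ principal-unit subgroup $1+x\cdot(R/(x^n))$, its order is a $p$-power, and the previous bounds pin it down to $q^m$ in the main case $q=p$ that is relevant to the binary and ternary applications. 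Galois theory applied to the abelian extension $\cF(\Lambda_{x^n})/\cF$ with group $(R/(x^n))^*$ then gives $[\cF(\Lambda_{x^n}):\mathcal{K}_n]=|H|=q^m$, hence $[\mathcal{K}_n:\cF]=q^{n-m-1}(q-1)$, and $\cF(\Lambda_{x^n})/\mathcal{K}_n$ is cyclic because $H$ is cyclic. To finish Part $(a)$, I would work in the tower $P_x\subset S_x\subset R_x$: the setup already provides that $R_x$ is rational with $e(R_x|P_x)=q^{n-1}(q-1)$ and $f(R_x|P_x)=1$, and multiplicativity of residue degrees and ramification indices, combined with the divisibility constraints $e(R_x|S_x)\mid q^m$ and $e(S_x|P_x)\mid q^{n-m-1}(q-1)$, forces $S_x$ to be rational and each step of the tower to be totally ramified.

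For Part $(b)$, the unramified-outside-$Q_x$ claim is immediate: $\cF(\Lambda_{x^n})/\cF$ is ramified only at $P_x$ and $P_\infty$, and the latter splits completely in $\cF(\Lambda_{x^n})/\cF(\Lambda_x)$, so $\cF(\Lambda_{x^n})/\cF(\Lambda_x)$ ramifies only at $Q_x$, and hence so does the subextension $\mathcal{K}_n/\cF(\Lambda_x)$. For the genus, the plan is to apply Riemann–Hurwitz to the cyclic degree-$q^m$ extension $\cF(\Lambda_{x^n})/\mathcal{K}_n$, which by Part $(a)$ is totally ramified only at $R_x|S_x$ with both places rational. Combining this with \eqref{genusf=x} gives
\begin{equation*}
2g(\mathcal{K}_n)-2=q^{-m}\bigl(q^{n-1}(n(q-1)-q-1)-d(R_x|S_x)\bigr),
\end{equation*}
so the proof of \eqref{Kgenus} reduces to the identity $d(R_x|S_x)=\sum_{i=1}^{q^m-1}q^{e_i}$.

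For this identity, I would use the local formula $d(R_x|S_x)=\sum_{\sigma\ne\mathrm{id}}v_{R_x}(\sigma(\lambda)-\lambda)$ valid for a totally ramified Galois extension, with $\lambda\in\Lambda_{x^n}$ a primitive $x^n$-torsion element, which is a uniformizer at $R_x$. The generator $\sigma=\sigma_{\overline{x+1}}$ of $\mathrm{Gal}(\cF(\Lambda_{x^n})/\mathcal{K}_n)$ acts by $\sigma(\lambda)=\lambda^{x+1}$, and the additivity of the Carlitz-module action yields $\sigma^i(\lambda)-\lambda=\sum_{j=1}^{i}\binom{i}{j}\lambda^{x^j}$. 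A short induction based on $\lambda^x=\lambda^q+x\lambda$ and $v_{R_x}(x)=q^{n-1}(q-1)$ shows $v_{R_x}(\lambda^{x^j})=q^j$ for $0\le j\le n-1$, so the valuation of $\sigma^i(\lambda)-\lambda$ equals $q^{e_i}$, where $e_i$ is the least positive integer with $p\nmid\binom{i}{e_i}$. Summing over $1\le i\le q^m-1$ then gives the required formula for $d(R_x|S_x)$. The main technical obstacle I anticipate is this valuation computation together with the check that $e_i<n$ for every $i$ in the range (so that $\lambda^{x^{e_i}}\ne 0$); the latter reduces to $\sigma^i\ne\mathrm{id}$ and hence to the faithfulness of the Galois action on $\Lambda_{x^n}$, which ultimately boils down to the order computation carried out in Part $(a)$.
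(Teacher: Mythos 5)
Your proposal is correct and takes essentially the same route as the paper: the same Frobenius computation of the order of $\overline{x+1}$ and Galois correspondence for $(a)$, the same Riemann--Hurwitz reduction for $(b)$, and the same evaluation of $d(R_x|S_x)$ via the action of $H$ on a primitive $x^n$-torsion uniformizer $\lambda$, your use of Hilbert's different formula $\sum_{\sigma\neq\mathrm{id}}v_{R_x}(\sigma(\lambda)-\lambda)$ being equivalent to the paper's $\nu_{R_x}(g'(\lambda))$ for the minimal polynomial $g$ of $\lambda$ over $\cK_n$. Your caveat that the order argument pins the order of $\overline{x+1}$ to exactly $q^m$ only when $q=p$ is well taken (e.g.\ for $q=4$, $n=5$ the order is $8$, not $q^m=16$); the paper's own proof glosses over this point, and the restriction is harmless for the binary and ternary applications.
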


\goodbreak 

\begin{proof}
$(a)$ We clearly have that $m$ is the unique integer such that $q^{m-1}<n\leq q^m$. Since 
	\[ (x+1)^{q^m}=x^{q^m}+1\equiv 1 \pmod{x^n}, \]
we see not only that $\overline{x+1}\in \left(R/(x^n)\right)^*$ but also that $H$ is a cyclic group of order $q^m$ because of the choice of the integer $m$. Thus, $\cF(\Lambda_{x^n})/\cK_n$ is a cyclic extension of degree $q^m$. 

Since $R_x$ is a rational place so is $S_x$ and since $R_x$ is the only place of $\cF(\Lambda_{x^n})$ lying over $S_x$ we must have that $e(R_x|S_x)=q^m$, that is $S_x$ is totally ramified in $\cF(\Lambda_{x^n})$. We sketch this situation in the following picture:
\begin{figure}[h!t] \label{figCyc2}
\begin{center}
		\begin{tikzpicture}
			\coordinate (a) at (0,0);
			\coordinate (b) at (0,1.3);
			\coordinate (c) at (0,2.5);
			\draw (a)--(c);
			\filldraw[white] (a) circle [radius=0.1];
			\filldraw[white] (b) circle [radius=0.3];
			\filldraw[white] (c) circle [radius=0.1];
			\node at (a) [below] {\scalebox{.75}{$\cF$}};
			\node at (b)  {\scalebox{.75}{$\mathcal{K}_n$}};
			\node at (c) [above] {\scalebox{.75}{$\cF(\Lambda_{x^n})$}};
			\node at (-1.2,.5){\scalebox{.75}{$q^{n-m-1}(q-1)$}};
			\draw (-.6,2.8) arc[start angle=158, end angle=200,radius=2.1cm] ;
			\node at (-1.5,2.3){\scalebox{.75}{Cyclic of}};
			\node at (-1.5,1.9){\scalebox{.75}{degree $q^m$}};
			\coordinate (d) at (1.5,0);
			\coordinate (e) at (1.5,1.3);
			\coordinate (f) at (1.5,2.5);
			\draw (d)--(f);
			\filldraw[white] (d) circle [radius=0.1];
			\filldraw[white] (e) circle [radius=0.3];
			\filldraw[white] (f) circle [radius=0.1];
			\node at (d) [below] {\scalebox{.75}{$P_x$}};
			\node at (e)  {\scalebox{.75}{$S_x$}};
			\node at (f) [above] {\scalebox{.75}{$R_x$}};
			\node at (3.1,.5){\scalebox{.75}{$e=q^{n-m-1}(q-1)$}};
			\node at (2.3,2){\scalebox{.75}{$e=q^m$}};
		\end{tikzpicture}
	\caption{Ramification of $S_x$.} \label{figcyclotomic2}
	\end{center}
\end{figure}	

\noindent	
$(b)$ Assume that $\cF(\Lambda_x)\subset \cK_n$. From the ramification situation described in Figure \ref{figcyclotomic1}, we immediately see that the extension $\mathcal{K}_n/\cF(\Lambda_x)$ is unramified outside the place $Q_x$. In particular, since $S_x$ is the only place of $\cK_n$ lying over $Q_x$, the extension $\cF(\Lambda_{x^n})/\cK_n$ is unramified outside the place $S_x$ and from \eqref{oneramified} we have that
	\[ 2g(\cF(\Lambda_{x^n}))-2 = (2g(\mathcal{K}_n)-2) q^m + d(R_x|S_x). \]
Now notice that from \eqref{genusf=x} we just need to compute the different exponent $d(R_x|S_x)$ to find the genus of $\mathcal{K}_n$. In order to do so, we recall that any root $\lambda$ of the polynomial $h(u)=u^{x^n}/u^{x^{n-1}}
$is a prime element of $R_x$, that is $\nu_{R_x}(\lambda)=1$ (Proposition 2.4 of \cite{Ha74}). Clearly the  minimal polynomial $g(u)$ of $\lambda$ over $\mathcal{K}_n$ is
	\[ g(u) = \prod_{\sigma\in H} (u-\sigma(\lambda)), \]
 so that from Proposition 3.5.12 of \cite{Stichbook09} we have that 
	\[ d(R_x|S_x) = \nu_{R_x}(g'(\lambda)). \] 
Since $H=\{\left( \overline{x+1}\right)^{i}:0\leq i<q^m\}$  we can write
	\[g'(\lambda) = \prod_{i=1}^{q^m-1}(\lambda-\lambda^{(x+1)^i}).\]
On the other hand (because $\lambda^{x^j}=0$ for $j\geq n$) we have that
	\[ \lambda^{(x+1)^i} = \sum_{j=0}^i \tbinom{i}{j}\lambda^{x^j} = \lambda+\sum_{j=1}^{i<n} \tbinom{i}{j}\lambda^{x^j},\]
so that
	\[g'(\lambda) = -\prod_{i=1}^{q^m-1} \sum_{j=1}^{i<n} \tbinom{i}{j}\lambda^{x^j}, \]
By induction on $j$, it is easy to see (since $\nu_{R_x}(\lambda)=1$) that $\nu_{R_x}(\lambda^{x^j})=q^j$ for $1\leq j\leq n-2$ and $\nu_{R_x}(\lambda^{x^{n-1}}) \ge q^{n-1}$. Then
	\[ \nu_{R_x} \Big(\sum_{j=1}^{i<n} \tbinom{i}{j} \lambda^{x^j} \Big) = q^{e_i}, \]
where $e_i$ is the least integer such that $p=\textrm{Char}(\fq)$ does not divide $\binom{i}{j}$ for $1\leq j\leq i$, and thus 
	\[ d(R_x|S_x) = \nu_{R_x}(g'(\lambda)) = \sum_{i=1}^{q^m-1}q^{e_i}. \]
Therefore 
\begin{equation} \label{eq: gKn 2}
	2g(\mathcal{K}_n)-2 = q^{-m} \Big(q^{n-1}(n(q-1)-q-1)-\sum_{i=1}^{q^m-1}q^{e_i} \Big),
\end{equation} 
from which \eqref{Kgenus} readily follows.
\end{proof}

We define now iso-dual AG-codes over $\mathcal{K}_n$ in the cases $q=2$ and $q=3$ by lifting to $\mathcal{K}_n$ some rational iso-dual codes.
\subsection{Binary cyclotomic iso-dual codes} Let $q=2$ and $\cF=\ff_2(x)$.
Since we have exactly three rational places in $\cF$ (namely $P_x$, $P_{x+1}$ and $P_\infty$), if we want to define an iso-dual code over $\cF$ of the form $C_\cL(D,G)$ which can be lifted to an iso-dual code over another field,  we are forced to consider $D$ as a divisor of $\cF$ whose support consists of exactly two rational places (the length must be even) and $G$ must be a  divisor of $\cF$ of degree zero. In the present situation the place $P_\infty$ splits completely in any cyclotomic extension of $\cF$ and so we define
\begin{equation} \label{eq: D cyc}
	D=P_{x+1}+P_\infty.
\end{equation}
Since we must have $\deg G=0$ (according to item ($c$) of Proposition \ref{prop:isodual}) we define   
\begin{equation} \label{eq: G cyc}
	G=P_{x^2+x+1}-2P_x,
\end{equation}
and thus $C_\cL(D,G)$ is an iso-dual AG-code over $\cF$. 

This code is the repetition code $Rep_2(2)=\{(0,0), (1,1)\}$ with parameters $[2,1,2]$, which is actually self-dual.
	By lifting this code to a properly chosen subfield $\cK$ of a cyclotomic function field we will get a not trivial iso-dual AG-code over $\cK$.
	We now show that for every integer $n\geq 2$ there is a binary iso-dual AG-code defined over $\cK_n$.	
	\begin{theorem} \label{thm:binary}
		For any integer $n\geq 2$ the lifted code $\mathfrak{L}_{\mathcal{K}_n/\cF}(C_\cL(D,G))$, where $D$ and $G$ are as in \eqref{eq: D cyc} and \eqref{eq: G cyc}, is a binary iso-dual AG-code over $\mathcal{K}_n$ of length $2^{n-m}$ and dimension $2^{n-m-1}$, where $m=\lceil \log_2(n) \rceil$. 		
\end{theorem}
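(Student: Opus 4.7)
The plan is to apply Theorem~\ref{teolevantado} to the extension $\mathcal{K}_n/\cF$ and the rational iso-dual code $C_\cL(D,G)$, and then read off the length and dimension from iso-duality.

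First I would check that $C_\cL(D,G)$ is iso-dual over $\cF$ via Proposition~\ref{prop:isodual}(c): the length $\deg D = 2$ is even, $\supp(D)\cap\supp(G)=\varnothing$, and $\deg G = \deg P_{x^2+x+1} - 2\deg P_x = 2-2 = 0 = (2-2)/2$.

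Next, I would verify hypotheses $(a)$, $(b)$, $(c)$ of Theorem~\ref{teolevantado} for $\cM=\cK_n$. For $(a)$, both $P_{x+1}$ and $P_\infty$ must split completely in $\cK_n/\cF$. Since $q=2$ gives $[\cF(\Lambda_x):\cF]=q-1=1$, one has $\cF(\Lambda_x)=\cF$, and the discussion preceding Proposition~\ref{Kn} yields $r = 2^{n-1} = [\cF(\Lambda_{x^n}):\cF]$ rational places above $P_\infty$ in $\cF(\Lambda_{x^n})/\cF$; so $P_\infty$ splits completely in $\cF(\Lambda_{x^n})/\cF$, and hence also in $\cK_n/\cF$. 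For $P_{x+1}$, since $x+1$ is monic irreducible and coprime to $x^n$, its Artin symbol in $\cF(\Lambda_{x^n})/\cF$ equals $\sigma_{\overline{x+1}}$, which by the very definition of $H$ generates $H$; hence the decomposition group of $P_{x+1}$ equals $H$, and $P_{x+1}$ splits completely in $\cK_n = \cF(\Lambda_{x^n})^H$. For $(b)$, $\cF(\Lambda_{x^n})/\cF$ is ramified only at $P_x$ and $P_\infty$, and $P_\infty$ is unramified when $q=2$; so $\cK_n/\cF$ is unramified outside $\{P_x\}\subset \supp(G)$.

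For $(c)$, I would invoke item $(iv)$ of the remark after Theorem~\ref{teolevantado}. Proposition~\ref{Kn}(a) shows $S_x$ is rational and totally ramified in $\cF(\Lambda_{x^n})$ with $e(R_x|S_x)=2^m$; together with $e(R_x|P_x)=2^{n-1}$, this gives $e(S_x|P_x)=2^{n-m-1}=[\cK_n:\cF]$, so $P_x$ is totally ramified in $\cK_n/\cF$. Since $\cK_n/\cF$ is unramified outside this rational totally ramified place, item $(iv)$ of the remark guarantees $d(S_x|P_x)$ is even.

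Once Theorem~\ref{teolevantado} applies, $\mathfrak{L}_{\mathcal{K}_n/\cF}(C_\cL(D,G))$ is an iso-dual AG-code over $\cK_n$. Its length equals $[\cK_n:\cF]\cdot |\supp D| = 2^{n-m-1}\cdot 2 = 2^{n-m}$, and iso-duality forces the dimension to be $2^{n-m-1}$. The only real bookkeeping lies in verifying the splitting of $P_{x+1}$ and $P_\infty$ from the Artin symbol calculus; once this is recorded, both the evenness of $d(S_x|P_x)$ and the parameters follow immediately from results already established in the paper, so no genuine obstacle remains.
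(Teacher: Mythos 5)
Your proposal is correct and follows essentially the same route as the paper: check the rational code is iso-dual via Proposition \ref{prop:isodual}(c), verify that $P_\infty$ and $P_{x+1}$ split completely in $\cK_n$ (the latter by the Artin-symbol/fixed-field argument already recorded before \eqref{eq: Kf}), deduce evenness of $d(S_x|P_x)$ from the one-totally-ramified-rational-place Hurwitz argument \eqref{oneramified}, and then apply Theorem \ref{teolevantado} and read off length $2\cdot[\cK_n:\cF]=2^{n-m}$ and dimension half of that. The only difference is that you spell out the decomposition-group computation for $P_{x+1}$ and the ramification index $e(S_x|P_x)=2^{n-m-1}$ explicitly, which the paper leaves implicit.
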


\begin{proof}
	First notice that when $q=2$ the rational place $P_\infty$ splits completely in $\cF(\Lambda_{x^n})$ and thus it splits completely in any subfield of $\cF(\Lambda_{x^n})$. In particular $P_\infty$ splits completely in $\cK_n
	$. The only ramified place in  $\cF(\Lambda_{x^n})/\cF$ is the place $P_x$ which is, in fact, totally ramified in $\cF(\Lambda_{x^n})/\cF$. Therefore  \[\textrm{Diff}(\cK_n/\cF)=d(S_x|P_x)S_x,\]
	and since $S_x$ is rational (see item ($a$) of Proposition \ref{Kn}) we have, from \eqref{oneramified},  that  the different exponent $d(S_x|P_x)$ is even.
	   
	On the other hand,  since $x+1$ is irreducible over $\ff_2$ and does not divide $x$, we have that $P_{x+1}$ splits completely in $\mathcal{K}_n$.
  Hence, we 	can apply Theorem \ref{teolevantado} in this situation with $D$ and $G$ as in \eqref{eq: D cyc} and \eqref{eq: G cyc} respectively, and we have that the lifted code $\mathfrak{L}_{\mathcal{K}_n/\cF}(C_\cL(D,G))$ is an iso-dual AG-code over $\mathcal{K}_n$.  In fact, since $[\cK_n:\cF]=2^{n-m-1}$ (see Figure \ref{figcyclotomic2} above),  $\mathfrak{L}_{\mathcal{K}_n/\cF}(C_\cL(D,G))$ is a binary iso-dual AG-code over $\mathcal{K}_n$ of length $2^{n-m}$ and dimension $2^{n-m-1}$. 
\end{proof}

\subsection{Ternary cyclotomic iso-dual codes} Assume now that $q=3$ so that $\cF=\ff_3(x)$. In this case we have four rational places $P_x$, $P_{x-1}$, $P_{x-2}$ and $P_\infty$ in $\cF$. We take again $f=x$ and, unlike the previous case in which $q=2$, we have now that $P_\infty$ is ramified in $\cF(\Lambda_{x^n})$. In fact, $P_\infty$ ramifies in $\cF(\Lambda_x)$ and then it splits completely into $3^{n-1}$ rational places of $\cF(\Lambda_{x^n})$ (see Figure \ref{figcyclotomic1} with $f=x$). Since in this case $e(R|P_\infty)=2$ for any place $R$ of $\cF(\Lambda_{x^n})$ lying over $P_\infty$ and $q=3$, each different exponent $d(R|P_\infty)=1$ and we can not use Theorem \ref{teolevantado} directly to lift an iso-dual rational AG-code over $\cF$ into a subextension of $\cF(\Lambda_{x^n})$. 

However, since $\cF(\Lambda_x)$ is a rational function field, if we show  that 
\[\cF(\Lambda_x)\subset\cK_n,\] 
then we will be able to use Theorem \ref{teolevantado} to lift  a rational iso-dual code over $\cF(\Lambda_x)$ to $\cK_n$ by showing that an even number of rational places of $\cF(\Lambda_x)$ split completely in $\cK_n$ (notice that if $\cF(\Lambda_x)\subset\cK_n$ then, from Theorem \ref{Kgenus}, we see that we are in the situation of formula \eqref{oneramified}, so that  the different exponent $d(S_x|Q_x)$ is even.)

We show now that $\cF(\Lambda_x)$ is also a subextension of $\mathcal{K}_n$. For any $u\in  \bar{\cF}$ the action is $u^x=u^3+xu$, and hence we see that
\[ \Lambda_x=\{u\in  \bar{\cF}: u^3+xu=0\}=\{0\}\cup\{u\in  \bar{\cF}:u^2+x=0\},\]
so that $\ff_3(x)=\ff_3(y^2)\subsetneq\ff_3(y)\subset \cF(\Lambda_x)$ where $y^2+x=0$. Since $\Lambda_x\subset\ff_3(y)$ we  conclude that $\cF(\Lambda_x)=\ff_3(y)$ and since
\[ y^{x+1}=(x+1)(\varphi+\mu_x)(y)=(\varphi+\mu_x)(y)+y=y^3+xy+y=y,\]
we also have that $\cF(\Lambda_x)$ is fixed by $H$ so that $\cF(\Lambda_x)\subset \mathcal{K}_n$. 

Now, since the rational place $P_{x-2}=P_{x+1}$ of $\cF$ splits completely into $2\cdot3^{n-m-1}$ rational places of $\mathcal{K}_n$, then $P_{x-2}$ splits completely into two rational places of $\cF(\Lambda_x)$ because $\cF(\Lambda_x)/\cF$ is a quadratic extension when $q=3$. 
In this way, we have three rational places of $\cF(\Lambda_x)$ splitting completely in $\mathcal{K}_n$, namely the two rational places of $\cF(\Lambda_x)$ lying over $P_{x-2}$, say $Q_{x-2}^1$ and $Q_{x-2}^2$, and $Q_\infty$. 

We define the divisors $D$ and $G$ of $\cF(\Lambda_x)$ as 
\begin{equation} \label{eq: D ter}
	D=Q_{x-2}^1+Q_{x-2}^2, 
\end{equation} 
and $G=(y)$. 
From the equation $y^2+x=0$ we see that 
\begin{equation} \label{eq: G ter}
	G=Q_x-Q_\infty
\end{equation} 
so that $\supp(D)\cap\supp(G)=\varnothing$. Since $\deg G=0$ and $\cF(\Lambda_x)$ is a rational function field, we have from 
($c$) of Proposition \ref{prop:isodual} that $C_\cL(D,G)$ is an iso-dual AG-code over $\cF(\Lambda_x)$. Since  
\[ \mathrm{Diff}(\mathcal{K}_n/\cF(\Lambda_x))=d(S_x|Q_x) S_x,\]
and $d(S_x|Q_x)$ is even, we see that  the extension $\mathcal{K}_n/\cF(\Lambda_x)$ satisfies the conditions of Theorem \ref{teolevantado}. In this way we have that the lifted code $\mathfrak{L}_{\mathcal{K}_n/\cF}(C_\cL(D,G))$ is a ternary iso-dual AG-code over $\mathcal{K}$ of length $4\cdot 3^{n-m-1}$. 
Hence, we have proved the following.
	
\begin{theorem} \label{thm:ternary}
	For any $n\in \N$, the lifted code $\mathfrak{L}_{\mathcal{K}_n/\cF}(C_\cL(D,G))$, where $D$ and $G$ are as in \eqref{eq: D ter} and \eqref{eq: G ter}, is a ternary iso-dual AG-code over $\mathcal{K}_n$ of length $4\cdot 3^{n-m-1}$ and dimension $2\cdot 3^{n-m-1}$, where $m=\lceil \log_3(n) \rceil$.
\end{theorem}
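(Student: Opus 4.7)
The plan is to invoke Theorem~\ref{teolevantado} with the separable extension $\mathcal{K}_n/\cF(\Lambda_x)$ and the seed code $C_\cL(D,G)$ defined in~\eqref{eq: D ter} and~\eqref{eq: G ter}; essentially all the ingredients have been assembled in the discussion preceding the theorem, so the proof amounts to a verification of the three hypotheses of Theorem~\ref{teolevantado} followed by a bookkeeping of the resulting parameters via Corollary~\ref{parameters}.

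First I would confirm that $C_\cL(D,G)$ is iso-dual over $\cF(\Lambda_x)$. Since $\cF(\Lambda_x) = \ff_3(y)$ is a rational function field (as exhibited by the relation $y^2+x=0$ derived above), the divisor $D$ consists of $n=2$ rational places, and $\deg G = \deg(Q_x - Q_\infty) = 0 = \tfrac{1}{2}(n-2)$, item~(c) of Proposition~\ref{prop:isodual} delivers the iso-duality at once.

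Next I would verify the three conditions of Theorem~\ref{teolevantado} for the extension $\mathcal{K}_n/\cF(\Lambda_x)$. For condition~$(a)$, each $Q_{x-2}^i$ is rational (as one of the two places of $\cF(\Lambda_x)$ lying above the rational place $P_{x-2}$ in the quadratic extension $\cF(\Lambda_x)/\cF$) and splits completely in $\mathcal{K}_n/\cF(\Lambda_x)$; indeed, the preceding discussion shows $P_{x-2}$ splits into $2\cdot 3^{n-m-1}$ rational places of $\mathcal{K}_n$, which must distribute evenly into $[\mathcal{K}_n:\cF(\Lambda_x)]=3^{n-m-1}$ rational places above each $Q_{x-2}^i$. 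For condition~$(b)$, the extension $\mathcal{K}_n/\cF(\Lambda_x)$ is unramified outside $\{Q_x\}$, because $P_x$ is the only place of $\cF$ ramified in the full cyclotomic extension $\cF(\Lambda_{x^n})/\cF$ and $Q_x$ is the unique place of $\cF(\Lambda_x)$ above~$P_x$. For condition~$(c)$, since $S_x$ is the unique rational place of $\mathcal{K}_n$ above $Q_x$ and no other place of $\cF(\Lambda_x)$ ramifies in $\mathcal{K}_n$, the one-place-ramified Hurwitz formula~\eqref{oneramified} applied with $g(\cF(\Lambda_x))=0$ yields
\[ d(S_x|Q_x) = 2g(\mathcal{K}_n) - 2 + 2[\mathcal{K}_n:\cF(\Lambda_x)], \]
which is manifestly even.

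With the three hypotheses in hand, Theorem~\ref{teolevantado} produces the iso-dual lifted code over $\mathcal{K}_n$, and the length and dimension follow from Corollary~\ref{parameters}: the length is $\deg D \cdot [\mathcal{K}_n:\cF(\Lambda_x)]$ and the dimension is half of it, by the iso-dual property. The main obstacle, and the reason why the discussion preceding the theorem had to build up so much machinery, is precisely the bookkeeping for conditions~$(a)$ and~$(c)$: one must trace the ramification of $P_x$ and the complete splitting of $P_{x-2}$ through the tower $\cF \subset \cF(\Lambda_x) \subset \mathcal{K}_n \subset \cF(\Lambda_{x^n})$, using the cyclotomic structure recorded in Proposition~\ref{Kn} and Figures~\ref{figcyclotomic1}--\ref{figcyclotomic2}. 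Once that bookkeeping is in place, the theorem is immediate from the lifting result.
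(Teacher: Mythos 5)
Your overall route coincides with the paper's: realize $\cF(\Lambda_x)=\ff_3(y)$ as a rational function field contained in $\cK_n$, check that the seed code $C_\cL(D,G)$ with $D=Q_{x-2}^1+Q_{x-2}^2$ and $G=Q_x-Q_\infty$ is iso-dual via item $(c)$ of Proposition \ref{prop:isodual}, verify the hypotheses of Theorem \ref{teolevantado} for $\cK_n/\cF(\Lambda_x)$ (complete splitting of the $Q_{x-2}^i$, ramification only above $Q_x$, evenness of $d(S_x|Q_x)$ from \eqref{oneramified} applied with genus-zero base field; the fact that $Q_\infty\in\supp G$ is harmless since the different exponents above it are $0$), and then read off the parameters from Corollary \ref{parameters}. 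This is exactly the chain of arguments in the discussion preceding the theorem.

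The step you left unevaluated is, however, where the trouble sits. Your own formula, length $=\deg D\cdot[\cK_n:\cF(\Lambda_x)]$, yields $2\cdot 3^{n-m-1}$, not the $4\cdot 3^{n-m-1}$ claimed in the statement: indeed $[\cK_n:\cF(\Lambda_x)]=[\cK_n:\cF]/[\cF(\Lambda_x):\cF]=(2\cdot 3^{n-m-1})/2=3^{n-m-1}$, so the $2\cdot 3^{n-m-1}$ places of $\cK_n$ above $P_{x-2}$ distribute as $3^{n-m-1}$ above each $Q_{x-2}^i$, and $\deg\con_{\cK_n/\cF(\Lambda_x)}(D)=2\cdot 3^{n-m-1}$ (with dimension $3^{n-m-1}$ by iso-duality). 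As written, your argument therefore does not deliver the stated length; to reach $4\cdot 3^{n-m-1}$ you would need a degree-$4$ divisor $D$ supported on completely splitting places of $\cF(\Lambda_x)$, and the only remaining completely splitting rational place, $Q_\infty$, is unavailable because it lies in $\supp G$. The mismatch appears to originate in the paper itself, whose text asserts the length $4\cdot 3^{n-m-1}$ without computation, seemingly by multiplying the two places in $\supp D$ by $[\cK_n:\cF]$ instead of $[\cK_n:\cF(\Lambda_x)]$; but since you did not evaluate your formula, you neither proved the stated parameters nor flagged the discrepancy, and this needs to be addressed explicitly rather than passed over.
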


\begin{remark}\label{mindT}
	The genus of $\cK_n$ for $q=2$ as well as for $q=3$ grows too quickly with respect to the length of  the lifted iso-dual codes, and then the standard estimate for the minimum distance of the lifted iso-dual codes given in  Corollary \ref{parameters} is meaningless except for very small values of $n$.  For example in the binary case, already for $n=7$ (hence $m=3$) we have that the length of the lifted iso-dual code given in Theorem \ref{thm:binary} is smaller than $2g(\cK_7)-2$. In fact, for $n=7$ the length of the lifted binary iso-dual code is $16$, and since the integers $e_1=e_7=2$, $e_2=e_6=4$, $e_3=e_5=2$ and $e_4=16$,  from \eqref{eq: gKn 2} we see that
		\[2g(\mathcal{K}_7)-2 = \tfrac{1}{2^m} \Big( 2^{n-1}(n-3)-\sum_{i=1}^{2^m-1}2^{e_i} \Big) = 
		\tfrac{1}{8} (5\cdot 2^6-32) = 28>16,\]
 as stated. A similar situation holds for the case of the lifted ternary iso-dual codes given in Theorem \ref{thm:ternary}. Therefore, since these codes are rather long, it seems an interesting problem to determine if there are alternative ways  of getting meaningful  estimates for the minimum distance of these cyclotomic iso-dual AG-codes.  
\end{remark}

 In view of Remark \ref{mindT} the authors propose the investigation of the minimum distance of the binary and ternary cyclotomic iso-dual codes.

\section{Declarations}

All authors certify that they have no affiliations with or involvement in any organization or entity with any financial interest or non-financial interest in the subject matter or materials discussed in this manuscript.

\bibliographystyle{abbrv}

\bibliography{Isodual} 

\end{document}